\numberwithin{equation}{section} %This needs to be called before cleveref otherwise equation references may go to the wrong section. Problem may be with the use of subeqn. See: http://tex.stackexchange.com/questions/65926/hyperlinking-problems-when-using-subequations-hyperref-and-cleveref
\newcommand{\End}{\operatorname{End}}
\newcommand{\scalprod}[2]{\left\langle #1,#2 \right\rangle}
\def\sp(#1,#2){\left\langle #1,#2 \right\rangle}
\def\bp#1{\sp(#1)}
\newcommand{\norm}[1]{\left\| #1 \right\|} % Length of a Vector
\DeclareFontFamily{U}{matha}{\hyphenchar\font45}
\DeclareFontShape{U}{matha}{m}{n}{
      <5> <6> <7> <8> <9> <10> gen * matha
      <10.95> matha10 <12> <14.4> <17.28> <20.74> <24.88> matha12
      }{}
\DeclareSymbolFont{matha}{U}{matha}{m}{n}
\DeclareFontFamily{U}{mathx}{\hyphenchar\font45}
\DeclareFontShape{U}{mathx}{m}{n}{
      <5> <6> <7> <8> <9> <10>
      <10.95> <12> <14.4> <17.28> <20.74> <24.88>
      mathx10
      }{}
\DeclareSymbolFont{mathx}{U}{mathx}{m}{n}
\DeclareMathSymbol{\obot}         {2}{matha}{"6B}
\DeclareMathSymbol{\bigobot}       {1}{mathx}{"CB}
\newcommand{\R}{\mathbb{R}}
\def\pderiv#1#2{\dfrac{\partial #1}{\partial #2}}
\def\spderiv#1#2#3{\dfrac{\partial^2 #1}{\partial #2\partial #3}}
\newcommand{\F}{{\mathcal{F}}}
\theoremstyle{break} %\theoremstyle{plain}
\newtheorem{theorem}{Theorem}[section]
\newtheorem{proposition}[theorem]{Proposition}
\newtheorem{remark}[theorem]{Remark}
\newtheorem{definition}[theorem]{Definition}
\newtheorem{claim}{Claim}[theorem]
\theoremstyle{nonumberplain}
\newtheorem{proof}{Proof}
\numberwithin{equation}{section} %sets equation numbers <chapter>.<section>.<index>
\newcounter{partCounter}
\newenvironment{parts}{
	\begin{list}{\bfseries{}Case \arabic{partCounter}~}{\usecounter{partCounter}}
}{
	\end{list}
}
\renewcommand{\d}{{\textrm d}}
\newcommand{\ext}[1]{\overline{#1}}
\begin{document}
\pagenumbering{roman}
%-----------------------------------------------------------
\title{Classification of Hamilton-Jacobi separation in orthogonal coordinates with diagonal curvature}
\author{Krishan Rajaratnam\footnote{e-mail: k2rajara@uwaterloo.ca}, Raymond G. McLenaghan\footnote{e-mail: rgmclenaghan@uwaterloo.ca}\\Department of Applied Mathematics, University of Waterloo, Canada}
\date{\today} % Last Draft 3.2
\maketitle
%-----------------------------------------------------------
\begin{abstract}\centering
	We find all orthogonal metrics where the geodesic Hamilton-Jacobi equation separates and the Riemann curvature tensor satisfies a certain equation (called the diagonal curvature condition). All orthogonal metrics of constant curvature satisfy the diagonal curvature condition. The metrics we find either correspond to a Benenti system or are warped product metrics where the induced metric on the base manifold corresponds to a Benenti system. Furthermore we show that most metrics we find are characterized by concircular tensors; these metrics, called Kalnins-Eisenhart-Miller (KEM) metrics, have an intrinsic characterization which can be used to obtain them. In conjunction with other results, we show that the metrics we found constitute all separable metrics for Riemannian spaces of constant curvature and de Sitter space.
	%TODO: mention how this is a classification of ChKTs in spaces of cnst curvature, i.e. a reduces their classification to the classification of SCKTs
	%TODO: mention that the metrics found satisfy the Robertson condition (i.e. have diagonal Ricci tensor) hence separate other equations as well.
\end{abstract}
%-----------------------------------------------------------
% Table of Contents
\tableofcontents
\cleardoublepage
\phantomsection		% allows hyperref to link to the correct page 

%\printglossaries
%\cleardoublepage
%\phantomsection

%% List of New Results
%\addcontentsline{toc}{section}{List of New Results}
%\textbf{List of New Results}
%\theoremlisttype{allname}
%\listtheorems{thmMy,propMy,corMy}
%\cleardoublepage
%\phantomsection
%
%% List of Results
%\addcontentsline{toc}{section}{List of Results}
%\textbf{List of Results}
%\theoremlisttype{allname} %all
%\listtheorems{theorem,proposition,corollary}
%\cleardoublepage
%\phantomsection

\pagenumbering{arabic}
\newpage
%-----------------------------------------------------------
\section{Introduction}

In a previous article \cite{Rajaratnam2014a}, KEM coordinates (webs) were defined and shown to be (orthogonal) separable coordinates for the Hamilton-Jacobi equation. It was shown in that article that the converse is true for Riemannian spaces of constant curvature. Namely that all orthogonal separable coordinate systems in Riemannian spaces of constant curvature are KEM coordinates.

%TODO: Could give other motivation for KEM coordinates here. E.g. we can solve for the Killing-Stackel space.
In \cite{Rajaratnam2014a}, the BEKM separation algorithm was introduced and shown to be a complete test of separability for natural Hamiltonians in KEM coordinates. This raises the problem of obtaining the KEM coordinates defined on a given pseudo-Riemannian manifold. Motivated by the case of spaces of constant curvature, in this article we obtain certain necessary but insufficient conditions on a coordinate system to be a KEM coordinate system. These conditions are precisely that the coordinate system be orthogonally separable with diagonal curvature (to be defined shortly). Our solution is sufficient to extend the Kalnins-Miller classification to spaces of constant curvature with arbitrary signature by giving an independent proof. In other words, our results will be sufficient to prove that every orthogonal separable coordinate system in a space of constant curvature is a KEM coordinate system.

In order to solve this problem, we note some hints from the literature. First we need a definition, a coordinate system is said to have \emph{diagonal curvature} if the Riemann curvature tensor satisfies $R_{ijik} = 0$ for $j \neq k$ in the coordinate induced basis. This definition is equivalent to requiring the curvature operator (which is a $\binom{2}{2}$-tensor associated with $R$ which induces a map in $\End(\wedge^{2}(M))$ \cite{Peter2006}) to be diagonal in the coordinate induced basis. Now, for the special case where the KEM coordinates are generated by a single concircular tensor with functionally independent eigenfunctions, Crampin has made some progress towards this problem in \cite{Crampin2003}. Indeed, proposition~6 in \cite{Crampin2003} states orthogonal separable coordinates with diagonal curvature satisfying an additional technical condition are generated by a concircular tensor.

Furthermore a careful study of the classification of Hamilton-Jacobi separation for Riemannian spaces of constant curvature due to Kalnins and Miller \cite{Kalnins1986} shows that a key to their solution was the diagonal curvature assumption. Their work is based on the seminal article by Eisenhart \cite{Eisenhart1934} where Eisenhart was able to make progress on the classification of orthogonally separable metrics assuming the coordinates had diagonal curvature (see \cite[Section 3]{Eisenhart1934}).

Motivated by these hints, we prove later that KEM coordinates have diagonal curvature in \cref{prop:KemDiagCurv}. Hence KEM coordinates are orthogonal separable coordinates with diagonal curvature. This gives a partial characterization of KEM coordinates. In this article we solve for all orthogonally separable metrics with diagonal curvature. This solution is sufficient to solve for all orthogonally separable metrics in spaces of constant curvature, thereby showing that all orthogonally separable metrics in these spaces are KEM metrics.

In the following subsections we will elaborate on such notions as KEM coordinates and present our main results in more detail.

\subsection{Concircular tensors and KEM coordinates}

We now define concircular tensors and KEM coordinates. First note that throughout this article, we assume $M$ is a $C^{\infty}$ manifold equipped with covariant pseudo-Riemannian metric $g$, (inverse) contravariant metric $G$ and the Levi-Civita connection induced by $g$ ia denoted by $\nabla$. A \emph{concircular tensor also called C-tensor}, $L$, is a symmetric contravariant tensor satisfying the following equation:

\begin{equation}
	\nabla_{k}L_{ij} = \alpha_{(i}g_{j)k}
\end{equation}

\noindent for some covector $\alpha$. See \cite{Rajaratnam2014a} and references therein for more on concircular tensors. By an \emph{orthogonal concircular tensor}, we mean a concircular tensor whose uniquely determined $\binom{1}{1}$-tensor is point-wise diagonalizable. It can be shown that for any OCT $L$ there exists a local product manifold $\prod_{i=0}^{k} M_{i}$ where $\dim M_{i} > 1$ for $i > 0$ such that the following hold \cite[theorem~6.1]{Rajaratnam2014a}:

\begin{itemize}
%	\item $L$ is given as follows in contravariant form:
%	\begin{equation}
%		L = \tilde{L} + \sum_{i=0}^{k} e_{i}G_{i}
%	\end{equation}
%	\noindent for some $e_{i} \in \R$.
	\item If $F_{0}$ is the canonical foliation induced by $M_{0}$, then $L|_{F_{0}}$ has simple eigenfunctions. In particular, $L|_{M_{0}}$ induces separable coordinates $(x_{0})$ for $(M_{0}, g|_{M_{0}})$.
	\item If $F_{i}$ is the canonical foliation induced by $M_{i}$ for $i > 0$, then $L|_{F_{i}}$ is a constant multiple of $G|_{F_{i}}$.
\end{itemize}

An important property of this product manifold is the following: if $(x_{i})$ are separable coordinates on $(M_{i}, g|_{M_{i}})$ for $i > 0$ then the product coordinates $(x_{0},x_{1},\dotsc,x_{k})$ are separable coordinates for $(M,g)$ \cite{Rajaratnam2014a}. This observation motivates the following definition:

\begin{definition}[KEM coordinates \cite{Rajaratnam2014a}]
	Let $L$ be a non-trivial\footnote{By a non-trivial concircular tensor, we mean one which is not a multiple of the metric when $n > 1$.} OCT with associated product manifold $\prod_{i=0}^{k} M_{i}$ as above. For each $i= 1,...,k$, let $(x_{i})$ be KEM coordinates on $M_{i}$. Then the product coordinates $(x_{0},x_{1},\dotsc,x_{k})$ are called  \emph{Kalnins-Eisenhart-Miller (KEM) coordinates}.
\end{definition}
\begin{remark}
	When $L$ has simple eigenfunctions, the above definition is non-recursive and the KEM coordinates are separable coordinates associated with a Benenti tensor (by definition) \cite{Benenti2005a}. In particular when $n = 2$ every non-trivial orthogonal concircular tensor is a Benenti tensor. Hence $n = 2$ defines a base case for the above recursive definition.
\end{remark}

It was shown in \cite[proposition~6.7]{Rajaratnam2014a} that KEM coordinates are necessarily separable. We will show later on that KEM coordinates have diagonal curvature. The separability of these coordinates implies that the metric satisfies the Levi-Civita equations in these coordinates \cite{Levi-Civita1904}. Using these facts, we will solve for all orthogonally separable metrics with diagonal curvature.

We also note that the orthogonal separability of coordinates $(x^i)$ is characterized by the existence of a \emph{characteristic Killing tensor} diagonalized in these coordinates \cite{Benenti1997a}. To elaborate, first recall that a Killing tensor is a symmetric $2$-tensor $K$ satisfying the following equation

\begin{equation}
	\nabla_{(i}K_{jk)} = 0
\end{equation}

A Killing tensor $K$ is called characteristic if it has point-wise real simple eigenvalues and admits local coordinates in which it is diagonal. In conclusion, we note that the classification of orthogonally separable coordinates on a given pseudo-Riemannian manifold $M$ is equivalent to the classification of characteristic Killing tensors on $M$.

%It was shown originally by Eisenhart \cite{Eisenhart1934} (see \cite{Benenti1997a} for a recent proof) that a coordinate system separates the Hamilton-Jacobi equation iff there exists a characteristic Killing tensor diagonalized in the coordinates. We will make use of this characterization of separation in this article.

\subsection{Orthogonally separable metrics with diagonal curvature}

In this section we will present the results of this article in detail. But first we need a preliminary characterization of orthogonal concircular tensors. Suppose $(x_{i})$ are local coordinates and $L$ is a tensor defined as follows:

\begin{equation} \label{eq:CTeqn}
	L = \sum\limits_{a \in M} \sigma_{a} \partial_{a} \otimes \d x_{a} + \sum\limits_{I \in P} e_{I} \sum\limits_{i \in I}  \partial_{i} \otimes \d x_{i}
\end{equation}

\noindent where $\{1,\dotsc,n\} = M \cup (\cup_{I \in P} I)$ is a partition (here $P$ is an index set and each $I \in P$ is a subset of $\{1,\dotsc,n\}$), the $\sigma_{a}(x_{a})$ are non-constant and the $e_{I}$ are constants. It can be shown that if $L$ is a concircular tensor, then the metric has the following form \cite{Rajaratnam2014a} (cf. \cite[Section~18]{Benenti2005a}):

\begin{subequations} \label{eq:SCKTmetricStruct}
	\begin{equation} \label{eq:SCKTmetricStructA}
		g = \sum\limits_{a \in M} \Phi_{a} \prod\limits_{ \substack{  b \in M  \\ b \neq a}} (\sigma_{a} - \sigma_{b})\d x_{a}^{2} + \sum\limits_{I \in P} \left (\prod\limits_{a \in M} (e_{I} - \sigma_{a}) \right) g^{I}
	\end{equation}
	\begin{equation} \label{eq:SCKTmetricStructB}
		g^{I}_{ij} = \begin{cases}
			f^{I}_{ij}(x^{I}) & i,j \in I \\
			0 & i \notin I
		\end{cases}
	\end{equation}
\end{subequations}

\noindent where $\Phi_{a}$ is a function of $x_a$ only. Conversely, if the metric has the above form, then $L$ is a CT \cite{Rajaratnam2014a}. It will follow by the proof of our main result (see \cref{sec:classOsepD}), that given a metric with the above form, one can construct $L$ such that its eigenspaces are uniquely determined from the metric.

We will see that most orthogonally separable metrics with diagonal curvature have a form given by the above equation, i.e. they admit a concircular tensor diagonalized in the coordinates. We now list the general form of orthogonally separable metrics with diagonal curvature.

The ones having a form given by \cref{eq:SCKTmetricStruct} can be divided into the following three classes. The first class are the \emph{irreducible metrics}

\begin{equation} \label{eq:irredMetric}
	g = \sum\limits_{a = 1}^{n} \Phi_{a} \prod\limits_{ \substack{b \neq a}} (\sigma_{a} - \sigma_{b})\d x_{a}^{2}
\end{equation}

\noindent which occur when the eigenfunctions of any associated concircular tensor are functionally independent. These metrics were first found by Eisenhart in his article \cite{Eisenhart1934}.  The remaining two classes of metrics are referred to as reducible metrics. The following are \emph{product metrics}

\begin{equation} \label{eq:prodMetric}
	g = \sum\limits_{I = 1}^{p} g^{I}
\end{equation}

\noindent where each $g^{I}$ is given in \cref{eq:SCKTmetricStructB}. The final class are the \emph{warped product metrics}

\begin{equation} \label{eq:warpedProdMetric}
	g = \sum\limits_{a = 1}^{m} \Phi_{a} \prod\limits_{ \substack{  b \leq m  \\ b \neq a}} (\sigma_{a} - \sigma_{b})\d x_{a}^{2} + \sum\limits_{I = 1}^{p} \left (\prod\limits_{a \leq m} (e_{I} - \sigma_{a}) \right) g^{I}
\end{equation}

\noindent where each $g^{I}$ is given in \cref{eq:SCKTmetricStructB}.

There is one class of orthogonally separable metric with diagonal curvature which is not in general associated with a concircular tensor, it is given as follows:

 \begin{equation} \label{eq:iregMetric}
 	g = \Phi_{1} \d x_{1}^{2} + \sum\limits_{I = 1}^{p} \sigma_{1}^{I} g^{I}
 \end{equation}

\noindent where $\Phi_{1}, \sigma_{1}^{I}$ are functions of $x_{1}$ at most with each $\sigma_{1}^{I}$ non-constant. In conclusion, every orthogonally separable metric with diagonal curvature has a form given by \cref{eq:SCKTmetricStruct} or \cref{eq:iregMetric}. We will show later that if $g$ is an orthogonally separable metric with diagonal curvature, then each of the metrics $g^{I}$ must also be an orthogonally separable metric with diagonal curvature. This shows why the classification is recursive: if $|I| > 1$ then our classification will tell us that each $g^{I}$ must be of the form given by \cref{eq:SCKTmetricStruct} or \cref{eq:iregMetric}. Thus one must recursively apply this classification to obtain all orthogonally separable metrics with diagonal curvature for a given dimension.

Using the above classification, we will prove the following theorem concerning orthogonal separation in spaces of constant curvature:

\begin{theorem}[KEM Separation Theorem] \label{thm:KEMsep}
	Suppose $(M,g)$ is a space of constant curvature. In orthogonal separable coordinates, $g$ necessarily has the form given by \cref{eq:SCKTmetricStruct}.

	In terms of tensors, suppose $K$ is a characteristic Killing tensor defined on $M$. Then there is a non-trivial concircular tensor $L$ defined on $M$ such that each eigenspace of $K$ is L-invariant, i.e. $L$ is diagonalized in coordinates adapted to the eigenspaces of $K$. Furthermore, the eigenspaces of $L$ are uniquely determined by the separable web defined by $K$.
\end{theorem}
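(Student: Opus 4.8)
The plan is to reduce everything to the paper's main classification of orthogonally separable metrics with diagonal curvature, and then to dispose of the single case in that classification which is not manifestly of concircular form. First I would verify that a space of constant curvature, written in \emph{any} orthogonal coordinate system, automatically has diagonal curvature. Indeed, if $R_{ijkl} = \kappa(g_{ik}g_{jl} - g_{il}g_{jk})$ for a constant $\kappa$, then for $j \neq k$ one has $R_{ijik} = \kappa(g_{ii}g_{jk} - g_{ik}g_{ji})$; in orthogonal coordinates the factor $g_{jk}$ vanishes and, by a short case check on whether $i$ equals $k$, the product $g_{ik}g_{ji}$ vanishes as well, so $R_{ijik}=0$, which is exactly the diagonal curvature condition. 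Now if $(x^i)$ are orthogonal separable coordinates, the classification in \cref{sec:classOsepD} forces $g$ into one of two shapes: the concircular form \cref{eq:SCKTmetricStruct} or the irregular form \cref{eq:iregMetric}. In the first case the metric assertion is immediate, so the whole problem reduces to showing that a \emph{constant curvature} metric of the irregular form is already of the form \cref{eq:SCKTmetricStruct}.

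This reduction is the main obstacle. The irregular metric $g = \Phi_{1}\,\d x_{1}^{2} + \sum_{I=1}^{p}\sigma_{1}^{I} g^{I}$ is a multiply warped product over the one-dimensional base spanned by $\partial_{1}$, with warping functions $\sigma_{1}^{I}(x_{1})$. After reparametrizing $x_{1}$ so that $\Phi_{1}=1$ and writing $\sigma_{1}^{I} = (f_{I})^{2}$, I would compute its sectional curvatures. Constant curvature $\kappa$ imposes a \emph{radial} condition from planes containing $\partial_{1}$, namely $f_{I}'' = -\kappa f_{I}$ for every $I$, and a \emph{mixed} condition from planes spanned by one direction of $g^{I}$ and one of $g^{J}$ with $I\neq J$, namely $f_{I}'f_{J}' = -\kappa f_{I}f_{J}$. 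The key point is that these two families together force the derivatives $(\sigma_{1}^{I})'$ to be pairwise proportional: solving $f_{I}''=-\kappa f_{I}$ writes each $f_{I}$ through a fixed pair of elementary solutions with constant coefficients $(a_{I},b_{I})$, and the mixed condition becomes a bilinear relation on the vectors $(a_{I},b_{I})$ (Euclidean orthogonality when $\kappa>0$, a Lorentzian analogue $a_{I}a_{J}=b_{I}b_{J}$ when $\kappa<0$, and $b_{I}b_{J}=0$ when $\kappa=0$), whose only solutions with all $\sigma_{1}^{I}$ non-constant make the $(\sigma_{1}^{I})'$ mutually proportional. Once $(\sigma_{1}^{I})' = \lambda_{I}\,h(x_{1})$ for constants $\lambda_{I}$ and a common $h$, each $\sigma_{1}^{I}$ is affine in a single function $\sigma_{1}$, i.e. $\sigma_{1}^{I} = c_{I}(e_{I}-\sigma_{1})$, and absorbing the constants $c_{I}$ into the fiber metrics $g^{I}$ rewrites \cref{eq:iregMetric} as the $m=1$ instance of \cref{eq:SCKTmetricStruct}. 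I would carry this out uniformly in the signature so that it also covers de Sitter space, where $\kappa>0$ in Lorentzian signature. The fiber curvatures take care of themselves: differentiating $(f_{I}')^{2}+\kappa f_{I}^{2}$ and using the radial equation shows it is constant, so each $g^{I}$ is again of the classified type, consistent with the recursive nature of the classification.

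Finally I would translate the metric statement into the tensor statement. Given a characteristic Killing tensor $K$, its eigenspaces are point-wise one-dimensional (simple eigenvalues) and define the separable web, equivalently a system of orthogonal separable coordinates $(x^i)$ in which $K$ is diagonal. By the first part, $g$ has the form \cref{eq:SCKTmetricStruct} in these coordinates, and by the construction accompanying \cref{eq:CTeqn} one obtains a concircular tensor $L$ of the form \cref{eq:CTeqn} diagonalized in $(x^i)$, whose eigenspaces are uniquely determined from the metric, hence by the web of $K$. Since both $K$ and $L$ are diagonal in the common coordinates, every one-dimensional eigenspace $\spa{\partial_i}$ of $K$ is automatically $L$-invariant, which is the assertion that $L$ is diagonalized in coordinates adapted to $K$; the uniqueness clause is then exactly the uniqueness of the eigenspaces of $L$ furnished by that construction. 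The one remaining point is that $L$ is non-trivial: a trivial concircular tensor is a constant multiple of $g$ and so admits every subspace as an eigenspace, which is incompatible with $L$ having eigenspaces uniquely pinned down by the genuine separable structure that the simplicity of the eigenvalues of $K$ guarantees.
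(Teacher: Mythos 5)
Your proposal follows essentially the same route as the paper: reduce via the classification of \cref{sec:classOsepD} to the irregular metric \cref{eq:iregMetric}, view it as a warped product over a one-dimensional base, and use the two constant-curvature conditions on the warping functions ($f_I'' = -\kappa f_I$ and $f_I' f_J' = -\kappa f_I f_J$) to force the $\sigma_1^I$ to be pairwise affinely related, hence of the $m=1$ form of \cref{eq:SCKTmetricStruct}. The only real difference is in the last computation: where you integrate the radial ODE explicitly and analyze the resulting bilinear condition case-by-case in the sign of $\kappa$, the paper concludes directly by showing $\left(\sigma_i'/\sigma_k'\right)' = 0$, which avoids the signature split but is otherwise the same argument.
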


The above theorem is a generalization of the results due to Kalnins and Miller from \cite{Kalnins1986}; it holds in Lorentzian spaces as well. Together with theory presented in \cite[theorem~6.8]{Rajaratnam2014a}, the above theorem shows that in a space of constant curvature, every orthogonal separable coordinate system is a KEM coordinate system. Furthermore we note here that these conclusions together with the results presented in \cite{Benenti1992} (cf. \cite{Kalnins1986}) allow us to conclude the following:

\begin{theorem}[KEM Separation Theorem II]
	Suppose $(M,g)$ is a space of constant curvature with Euclidean signature or Lorentzian\footnote{We take Lorentzian signature to be $(-+\dots +)$} signature with positive curvature. Then every separable (not necessarily orthogonal) coordinate system has an orthogonal equivalent which is a KEM coordinate system.
\end{theorem}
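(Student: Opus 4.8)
The plan is to derive this theorem by assembling two ingredients that require no new computation: the structural conclusion already extracted from \cref{thm:KEMsep}, and a reduction of arbitrary (possibly non-orthogonal) separation to orthogonal separation that is available precisely in the spaces named in the hypothesis. First I would recall what has already been established in the constant-curvature setting: \cref{thm:KEMsep} together with \cite[theorem~6.8]{Rajaratnam2014a} shows that \emph{every} orthogonal separable coordinate system in a space of constant curvature is a KEM coordinate system. Consequently, once an orthogonal separable equivalent of a given separable system has been produced, it is automatically a KEM coordinate system, and nothing further needs to be proved about it. This reduces the whole theorem to a single missing step.

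The essential step is therefore to show that, under the stated signature and curvature restrictions, every separable coordinate system admits an orthogonal equivalent. Here I would invoke the reduction theory of Benenti \cite{Benenti1992} (cf.\ \cite{Kalnins1986}). The guiding picture is that a general separable system is encoded by a Killing--St\"ackel space of Killing tensors governing the essential coordinates, together with a space of Killing vectors accounting for the ignorable (cyclic) directions, with non-orthogonality arising from cross terms between the two blocks; two separable systems are then regarded as equivalent when they determine the same such data up to the natural reductions. The content of the cited results is that, for a Riemannian space of constant curvature and for de Sitter space (constant positive curvature with Lorentzian signature), the ignorable directions can always be rearranged so as to be orthogonal to the essential block, so that each equivalence class of separable systems contains an orthogonal representative. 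I would isolate this as a lemma, attributed to \cite{Benenti1992}, phrased so that its output is an orthogonal separable system matching the hypotheses of \cref{thm:KEMsep}.

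With these two pieces in hand the proof is immediate: given a separable system, pass to its orthogonal equivalent via \cite{Benenti1992}, and then identify this orthogonal separable system as a KEM coordinate system via \cref{thm:KEMsep} and \cite[theorem~6.8]{Rajaratnam2014a}. I expect the main obstacle to be expository rather than computational, namely pinning down the precise notion of \emph{orthogonal equivalent} so that the reduction borrowed from \cite{Benenti1992} dovetails exactly with the orthogonal separability assumed in \cref{thm:KEMsep}. The curvature and signature restriction is what makes the reduction step possible: in Lorentzian spaces of non-positive curvature, such as pseudo-Euclidean or anti-de Sitter space, genuinely non-orthogonal separation occurs through null coordinates and cannot be reduced to an orthogonal web, which is exactly why the theorem must be confined to the Euclidean-signature and positive-curvature Lorentzian cases.
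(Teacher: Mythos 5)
Your proposal matches the paper's own argument exactly: the paper derives this theorem by combining \cref{thm:KEMsep} with \cite[theorem~6.8]{Rajaratnam2014a} (so that every orthogonal separable system in constant curvature is KEM) and then invoking the reduction results of \cite{Benenti1992} (cf.\ \cite{Kalnins1986}) to replace an arbitrary separable system by an orthogonal equivalent under the stated signature and curvature restrictions. No further comparison is needed.
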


We now provide an outline of the layout of this article. In \cref{sec:premRes} we first show that any KEM coordinates are necessarily orthogonal separable coordinates with diagonal curvature. Then we include the first steps of the derivation of all orthogonal separable coordinates with diagonal curvature; this part of the derivation can be found in the literature and so is included here for completeness. In \cref{sec:classOsepD} we finish off this derivation and consequently prove that the metrics listed above do in fact constitute all separable metrics with diagonal curvature. Finally in \cref{sec:oSepSCC} we will do additional calculations in order to prove the KEM separation theorem.

\section{Preliminary results} \label{sec:premRes}

In this section we introduce the problem we wish to solve precisely. Then we do some relevant calculations from the literature for completeness. The outline is as follows: First we show that any KEM coordinates are necessarily orthogonal separable coordinates with diagonal curvature. Then we solve for all metrics which satisfy these conditions. In this section we will partially complete this calculation using results from the literature and then finish it in the next section.

We first prove the following optional result which is included for completeness:

\begin{proposition} \label{prop:KemDiagCurv}
	KEM coordinates are orthogonal separable coordinates with diagonal curvature.
\end{proposition}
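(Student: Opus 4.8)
The plan is to verify the three requirements separately, the first two being immediate and the third carrying all the weight. Separability is already established in \cite[proposition~6.7]{Rajaratnam2014a}. Orthogonality I would obtain by induction on $\dim M$ from the block structure of \cref{eq:SCKTmetricStruct}: the base part is diagonal in the coordinates $(x_a)_{a\in M}$, the warping factors $\prod_{a\in M}(e_I-\sigma_a)$ depend only on the base coordinates, and each fibre metric $g^I$ is the metric of the lower-dimensional KEM manifold $M_I$, hence diagonal by the inductive hypothesis; the three families of directions are mutually orthogonal, so $g$ is diagonal. It then remains to prove the diagonal-curvature condition $R_{ijik}=0$ for $j\neq k$. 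By the antisymmetry of $R$ this is automatic unless $i,j,k$ are distinct, so the task is to show $R_{abac}=0$ whenever $a,b,c$ are distinct, and I would again argue by induction on dimension, the case $\dim M\le 2$ being vacuous since no such triple exists.

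For the inductive step I would exploit the non-trivial orthogonal concircular tensor $L$ that generates the KEM coordinates and is diagonal in them, with eigenvalues $\lambda_a$ (the simple eigenfunctions $\sigma_a$ on the base $M_0$, and the constants $e_I$ on the fibres $M_I$). Tracing the defining equation $\nabla_k L_{ij}=\alpha_{(i}g_{j)k}$ gives $\alpha=\d(\operatorname{tr}_g L)$, so $\nabla\alpha=\operatorname{Hess}(\operatorname{tr}_g L)$ is symmetric. Differentiating the defining equation once more and antisymmetrising through the Ricci identity yields an algebraic identity relating the curvature to $\nabla\alpha$; in an orthonormal frame adapted to the eigenspaces of $L$ (equivalently, up to nonzero factors, in the coordinate frame) its relevant consequence is that for distinct $a,b,c$ with $\lambda_a\neq\lambda_b$,
\[
	R_{abac}=\frac{\big(\operatorname{Hess}(\operatorname{tr}_g L)\big)_{bc}}{2(\lambda_a-\lambda_b)} .
\]
Using the pair symmetry $R_{abac}=R_{acab}$, the same identity covers the case $\lambda_a\neq\lambda_c$. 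Hence, as soon as $a,b,c$ do not all share a single eigenvalue, $R_{abac}$ is a nonzero multiple of an off-diagonal component of $\operatorname{Hess}(\operatorname{tr}_g L)$.

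The proof would then reduce to two points. First, $\operatorname{Hess}(\operatorname{tr}_g L)$ is diagonal in the KEM coordinates: since $\operatorname{tr}_g L$ differs from $\sum_{a\in M}\sigma_a(x_a)$ by a constant its mixed partial derivatives vanish and only Christoffel terms survive, the block/warped structure of \cref{eq:SCKTmetricStruct} kills every off-diagonal component involving a fibre index (here diagonality of each $g^I$ is used again), while the base--base components vanish through a cancellation produced by the off-diagonal components of the defining equation, which give $\partial_c\ln\sqrt{g_{bb}}=\sigma_c'/\big(2(\sigma_c-\sigma_b)\big)$ for $b\neq c$ on $M_0$. Second, the remaining triples, those with $a,b,c$ all lying in a single fibre $M_I$ so that $\lambda_a=\lambda_b=\lambda_c=e_I$ and the identity above is vacuous, are handled by the warped-product form of the Gauss equation: the warping correction to $R_{abac}$ involves only the off-diagonal fibre-metric components $g^I_{bc}$ and $g^I_{ab}$ and so vanishes for distinct indices, leaving $R_{abac}$ a multiple of the intrinsic curvature $R^{M_I}_{abac}$, which is zero because $M_I$ carries KEM coordinates of smaller dimension and the inductive hypothesis applies.

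The main obstacle is the diagonalisation of $\operatorname{Hess}(\operatorname{tr}_g L)$, and within it the base--base components, whose vanishing is not formal but a genuine cancellation between two terms each built from the relation $\partial_c\ln\sqrt{g_{bb}}=\sigma_c'/\big(2(\sigma_c-\sigma_b)\big)$. Care is also needed in the bookkeeping of eigenvalue coincidences, to ensure that every distinct triple $a,b,c$ is reached either by the integrability identity (after possibly swapping $b$ and $c$) or by the fibre induction, and that the warped corrections genuinely drop out, for which the orthogonality of the fibres supplied by the induction is essential.
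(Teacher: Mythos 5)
Your proposal is correct in outline, but it takes a genuinely different route from the paper. The paper's proof never touches the concircular tensor: it writes the KEM metric as a twisted (in fact warped) product, compares its curvature $R$ with the curvature $\tilde R$ of the untwisted product metric using the curvature formulas for twisted products, checks by direct computation with the explicit twist functions $\rho_I^2=\prod_{a}(e_I-\sigma_a)$ that the relevant difference terms vanish for distinct $i,j,k$, and disposes of the one surviving case ($i,j,k$ all in a single fibre) by induction. You instead route everything through the integrability condition of the OCT $L$: differentiating $\nabla_kL_{ij}=\alpha_{(i}g_{j)k}$ and antisymmetrising via the Ricci identity does yield, for distinct indices in an adapted orthogonal frame, that $(\lambda_a-\lambda_b)R_{abac}$ is a nonzero multiple of $\bigl(\operatorname{Hess}(\operatorname{tr}_gL)\bigr)_{bc}$; your trace identity $\alpha=\d(\operatorname{tr}_gL)$ is right up to a conventional factor; and your bookkeeping of eigenvalue coincidences is exhaustive because the factors $(\sigma_a-\sigma_b)$ and $(e_I-\sigma_a)$ in \cref{eq:SCKTmetricStructA} are nowhere zero, so the only degenerate triples lie in a single fibre, which you, like the paper, settle by induction together with the warped-product Gauss equation. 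The computational weight is comparable: your cancellation of the base--base Hessian components via $\partial_c\ln\sqrt{|g_{bb}|}=\sigma_c'/\bigl(2(\sigma_c-\sigma_b)\bigr)$ is essentially the same Christoffel-symbol work the paper compresses into ``a direct calculation using the specific form of the twist functions''. What your route buys is a conceptual link between diagonal curvature and concircular tensors (in the spirit of Crampin's proposition cited in the introduction) obtained through a frame-independent identity; what the paper's route buys is that it needs only the warped-product shape of the metric and not $L$ itself. The constant $1/2$ and its sign in your displayed identity depend on the symmetrisation convention and on the signs $g_{aa}=\pm1$ in a pseudo-orthonormal frame, but since only the vanishing of $R_{abac}$ is at stake this is harmless.
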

\begin{proof} %TODO: probably make this proof more readable
	Assume that $(x^{i})$ are KEM coordinates. It follows from proposition~6.7 in \cite{Rajaratnam2014a} that these coordinates are separable. We now show that they necessarily have diagonal curvature. To do this, we use the eq.~(3) in \cite{Meumertzheim1999} which is a formula for the Riemann curvature tensor in a twisted product. Observe that the metric necessarily has the following form:
	
	\begin{equation}
		g = \sum\limits_{a \in M} e_{a}\rho_{a}^{2} \d x_{a}^{2} + \sum\limits_{I \in P} \rho_{I}^{2} g^{I}
	\end{equation}
	
	\noindent where $\{1,\dotsc,n\} = M \cup (\cup_{I \in P} I)$ is a partition (here $P$ is an index set and each $I \in P$ is a subset of $\{1,\dotsc,n\}$), each $e_{a} = \pm 1$ as the case may be and each $\rho_{i}(x_{1},\dotsc,x_{m})$ is a positive valued function and each $g^{I}$ is given by \cref{eq:SCKTmetricStructB}.
	
	Define $\tilde{g}$ as follows:
	
	\begin{equation} %TODO: not sure why this metric can be used. Shouldn't we treat $g$ as a warped product metric? How to apply formula in \cref{eq:tpRiemTen}?
		\tilde{g} = \sum\limits_{a \in M} e_{a} \d x_{a}^{2} + \sum\limits_{I \in P} g^{I}
	\end{equation}
	
	Let $R$ (resp. $\tilde{R}$) denote the Riemann curvature tensor of $g$ (resp. $\tilde{g}$). Then for $i \in I, j \in J, k \in K$ with $i , j , k$ distinct, it follows from eq.~3 in \cite{Meumertzheim1999} that
	
	\begin{align}
		\scalprod{(R(\partial_{i}, \partial_{k}) - \tilde{R}(\partial_{i}, \partial_{k}))\partial_{j}}{\partial_{i}} & = g(\partial_{i},\partial_{i})(\bp{ \nabla_{\partial_{k}} U_I - \bp{\partial_{k}, U_I} U_I, \partial_{j} })
	\end{align}
	
	\noindent where $U_I = - \nabla \log \rho_I$ is the negative gradient of $\log \rho_I$. By using eq.~(2) in \cite{Meumertzheim1999}, we get the following:
	
	\begin{equation}
		\bp{ \nabla_{\partial_{k}} U_I - \bp{\partial_{k}, U_I} U_I, \partial_{j} } = (\partial_{k} \log \rho_J \partial_j + \partial_j \log \rho_K \partial_k) \log \rho_I - (\partial_k \log \rho_I) (\partial_j \log \rho_I)
	\end{equation}
	
	The above vanishes if either $j \notin M$ or $k \notin M$. So we can assume further that $I , J , K$ are distinct. Then from a direct calculation using the specific form of the twist functions (see \cite[theorem~6.1]{Rajaratnam2014a}), it follows that the above is identically zero in this case. Thus we have proven that if $i,j,k$ are distinct, then 
	\begin{equation}
		\scalprod{R(\partial_{i}, \partial_{k})\partial_{j}}{\partial_{i}} = \scalprod{\tilde{R}(\partial_{i}, \partial_{k})\partial_{j}}{\partial_{i}}
	\end{equation}

	First observe that $R_{ijik} = \scalprod{R(\partial_{i}, \partial_{k})\partial_{j}}{\partial_{i}}$ and we can assume $i,j,k$ are distinct to check the diagonal curvature condition. Also note that $\tilde{R}(\partial_{i}, \partial_{k})\partial_{j}$ is not necessarily zero only if $I = J = K$ or if $i,j,k \in M$ (see \cite[corollary~2]{Meumertzheim1999}). In the later case, clearly $R_{ijik} = 0$. In the former case the result follows by induction from the above equation.
\end{proof}

We now assume $(x^{i})$ are orthogonal separable coordinates with diagonal curvature. Assume the covariant metric $g = \operatorname{diag}(e_{1}H_{1}^{2},...,e_{n}H_{n}^{2})$ where each $e_{i} = \pm 1$ as the case may be. The separability of this metric implies it satisfies the Levi-Civita equations \cite{Levi-Civita1904,Benenti1997a}:

%The assumption of orthogonal separability implies the existence of a characteristic Killing tensor $K = \diag(\lambda_{1},\dotsc,\lambda_{n})$, where the $\lambda_{i}$ satisfy the following PDEs:
%\begin{eqnarray*} \label{KTSys}
%	\pderiv{\lambda_{i}}{x_{j}} & = & (\lambda_{i} - \lambda_{j})\pderiv{\log \Abs{g_{ii}}}{x_{j}} \quad (no \; sum) \\
%	\pderiv{\lambda_{i}}{x_{i}} & = & 0
%\end{eqnarray*}

\begin{subequations} \label{IntConds}
	\begin{equation} \label{IntCondsA}
		\spderiv{\log H_{i}^{2}}{x_{i}}{x_{j}} + \pderiv{\log H_{i}^{2}}{x_{j}}\pderiv{\log H_{j}^{2}}{x_{i}} = 0
	\end{equation}
	\begin{equation} \label{IntCondsB}
		\spderiv{\log H_{i}^{2}}{x_{j}}{x_{k}} - \pderiv{\log H_{i}^{2}}{x_{j}}\pderiv{\log H_{i}^{2}}{x_{k}} + \pderiv{\log H_{i}^{2}}{x_{j}}\pderiv{\log H_{j}^{2}}{x_{k}} + \pderiv{\log H_{i}^{2}}{x_{k}}\pderiv{\log H_{k}^{2}}{x_{j}} = 0
	\end{equation}
\end{subequations}

\noindent where $i$,$j$, and $k$ are all distinct. We will now proceed to solve the above equations augmented with the diagonal curvature condition.

The following calculation is from \cite[proposition~6]{Crampin2003} which is adapted from Kalnins' book \cite{Kalnins1986} which is from \cite{Eisenhart1934}. First note that in orthogonal coordinates the Riemann curvature component $R_{jiik}$ for $i$,$j$,$k$ distinct has the following form \cite{Eisenhart1934}:

\begin{multline} \label{Rjiik}
	R_{jiik} = \frac{e_{i} H_{i}^{2}}{4}  \biggr [ 2 \spderiv{\log H_{i}^{2}}{x_{j}}{x_{k}} + \pderiv{\log H_{i}^{2}}{x_{j}}\pderiv{\log H_{i}^{2}}{x_{k}} \\
	- \pderiv{\log H_{i}^{2}}{x_{j}}\pderiv{\log H_{j}^{2}}{x_{k}} - \pderiv{\log H_{i}^{2}}{x_{k}}\pderiv{\log H_{k}^{2}}{x_{j}} \biggr ]
\end{multline}

In consequence of the second integrability condition, \cref{IntCondsB}, we find that:

\begin{equation} \label{RjiikSep}
	R_{jiik} = \frac{3 }{4} e_{i} H_{i}^{2}  \spderiv{\log H_{i}^{2}}{x_{j}}{x_{k}}
\end{equation}

Thus the diagonal curvature assumption implies that for $i$,$j$,$k$ distinct:

\begin{equation} \label{eq:KemCond}
	\spderiv{\log H_{i}^{2}}{x_{j}}{x_{k}} = 0
\end{equation}

Solving the above equation we find that:

\begin{equation} \label{genHForm1}
	H_{i}^{2} = \prod \limits_{j \neq i} \Psi_{i j}(x_{i}, x_{j})
\end{equation}

Now the first integrability condition, \cref{IntCondsA}, applied twice to $i \neq j$ implies that:

\begin{subequations}
	\begin{equation} \label{IntCondsA1}
		\spderiv{\log H_{i}^{2}}{x_{i}}{x_{j}} = -\pderiv{\log H_{i}^{2}}{x_{j}}\pderiv{\log H_{j}^{2}}{x_{i}}
	\end{equation}
	\begin{equation} \label{IntCondsA2}
		\spderiv{\log H_{i}^{2}}{x_{i}}{x_{j}}  = \spderiv{\log H_{j}^{2}}{x_{i}}{x_{j}}
	\end{equation}
\end{subequations}

If we substitute the form of $H$ from \cref{genHForm1} into  \cref{IntCondsA2} we have that:

\begin{equation}
	\spderiv{\log \dfrac{\Psi_{ij}}{\Psi_{ji}} }{x_{i}}{x_{j}}  = 0
\end{equation}

Thus

\begin{equation}
\dfrac{\Psi_{ij}}{\Psi_{ji}} = \dfrac{\chi_{ij}(x_{i})}{\chi_{ji}(x_{j})}
\end{equation}

If we let $\Phi_{ij} = \Phi_{ji} = \dfrac{\Psi_{ij}}{\chi_{ij}} $ and $\Phi_{i} = \prod\limits_{j \neq i} \chi_{ij} $, \cref{genHForm1} becomes:

\begin{equation} \label{genHForm2}
	H_{i}^{2} = \Phi_{i}(x_{i}) \prod \limits_{j \neq i} \Phi_{i j}(x_{i}, x_{j})
\end{equation}

Now if we substitute the form of $H$ above into \cref{IntCondsA1} we have that:

\begin{equation}
	\spderiv{\Phi_{ij} }{x_{i}}{x_{j}}  = 0
\end{equation}

Thus 

\begin{equation}
	\Phi_{ij}(x_{i},x_{j}) = \sigma_{ij}(x_{i}) + \sigma_{ij}(x_{j})
\end{equation}

This gives us the following general form of $H$ satisfying \cref{IntCondsA} and \cref{eq:KemCond}:

\begin{equation} \label{HForm}
	H_{i}^{2} = \Phi_{i}(x_{i})\prod_{j \neq i} (\sigma_{ij}(x_{i}) + \sigma_{ji}(x_{j})) \quad (i=1,..,n) 
\end{equation}

The above equation was first derived by Eisenhart in his seminal paper \cite{Eisenhart1934} and it was used resourcefully by Kalnins and Miller in their classification of separable coordinates systems in $S^{n}, E^{n}$ and $H^{n}$ \cite{Kalnins1986}. When $n=2$, the above equation gives the general solution and it follows that the metric has the form given by \cref{eq:SCKTmetricStruct}. Thus for the remainder of the article we assume $n > 2$. Now for $i$,$j$,$k$ distinct we evaluate \cref{IntCondsB} with all cyclic permutations of $i$,$j$,$k$ using the form of $H$ given above to get the following system of equations:

\begin{subequations} \label{eqn:remain}
	\begin{equation} \label{eqn:remainA}
		\sigma_{ji}'\sigma_{ki}'(\sigma_{jk} + \sigma_{kj}) - \sigma_{ji}'\sigma_{kj}'(\sigma_{ki} + \sigma_{ik}) - \sigma_{ki}'\sigma_{jk}'(\sigma_{ij} + \sigma_{ji}) = 0
	\end{equation}
	\begin{equation} \label{eqn:remainB}
		\sigma_{kj}'\sigma_{ij}'(\sigma_{ki} + \sigma_{ik}) - \sigma_{kj}'\sigma_{ik}'(\sigma_{ij} + \sigma_{ji}) - \sigma_{ij}'\sigma_{ki}'(\sigma_{jk} + \sigma_{kj}) = 0
	\end{equation}
	\begin{equation} \label{eqn:remainC}
		\sigma_{ik}'\sigma_{jk}'(\sigma_{ij} + \sigma_{ji}) - \sigma_{ik}'\sigma_{ji}'(\sigma_{jk} + \sigma_{kj}) - \sigma_{jk}'\sigma_{ij}'(\sigma_{ki} + \sigma_{ik}) = 0
	\end{equation}
\end{subequations}

\noindent where the primes indicates differentiation. Now since each $\Phi_{ij} = \sigma_{ij} + \sigma_{ji} $ is non-zero, the determinant of the above equations must vanish, this gives us the following equation:

\begin{equation} \label{eq:remainDet}
	\sigma_{ij}'\sigma_{jk}'\sigma_{ki}' + \sigma_{ji}'\sigma_{kj}'\sigma_{ik}' = 0
\end{equation}

We will solve the remaining equations in the next section.

\section{Classification of orthogonal separable coordinates with diagonal curvature} \label{sec:classOsepD}

We continue the derivation started in the previous section. An important subset of coordinates are the coordinates $i$ which satisfy $\sigma_{ij}' \neq 0 \quad \forall j \neq i$. These coordinates will be called \emph{connecting coordinates} for reasons that will become apparent later on. The set of all connecting coordinates for a given separable metric will be denoted by $M$ and we will assume the coordinates are chosen such that $M = \{1,...,m\}$.

First we give a rough idea of how we will do this classification. When there are no connecting coordinates, we show that metric is necessarily a product metric. When there is at least one connecting coordinate we show that the metric is any one of the other metrics listed in the introduction. In order to prove that the metric is a product metric when it has no connecting coordinates we define a relation among the coordinates. We then prove that this relation is an equivalence relation. Then we use this equivalence relation to prove that the metric has at least one connecting coordinate or is a product metric.

We now define a relation among the coordinates to distinguish between the different possible metrics that can occur. The relation is designed so that if it gives multiple partitions then these partitions are associated with a product metric. Furthermore, we should be able to conclude that the metric is connected if there is only one partition. It's easiest to first define when two coordinates $i$ and $j$ are inequivalent. If $I$ and $J$ are distinct partitions from the product metric in \cref{eq:prodMetric} and $i \in I$ and $j \in J$, then the first thing to notice is that $\sigma_{ij}' = \sigma_{ji}' = 0$. But with this definition of in-equivalence, if there are multiple partitions, it's still possible that we're dealing with a warped product metric given by \cref{eq:warpedProdMetric}; we need to make sure that there is no third coordinate $k$ such that $\sigma_{ki}',  \sigma_{kj}' \neq 0$. This gives us a definition of equivalence:

\begin{definition} \label{defn:connected}
	Two distinct variables $i$ and $j$ are said to be connected and denoted $i \sim j$ if one of the following conditions hold:
	\begin{align}
		\sigma_{ij}' \neq 0 \\
		\sigma_{ji}' \neq 0 \\
		\exists \; k \neq i,j : \sigma_{ki}', \sigma_{kj}' \neq 0
	\end{align}
	
	Also we define $\sim$ such that $i \sim i$.
\end{definition}

There are two special types of connectedness that arise, the first is when $i$,$j$ satisfy $\sigma_{ij}' \neq 0$ or $\sigma_{ji}' \neq 0$, in this case we say that $i$ and $j$ are \emph{strongly connected}. If $i$,$j$ are connected but not strongly connected, we say that $i$ and $j$ are \emph{weakly connected} by $k$ or just that $i$ and $j$ are \emph{weakly connected}.

\begin{proposition}
	The relation $\sim$ defined in Definition~\ref{defn:connected} is an equivalence relation.
\end{proposition}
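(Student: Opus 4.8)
The plan is to verify the three defining properties of an equivalence relation, with transitivity being the only substantial one. Reflexivity holds by fiat, since the clause $i \sim i$ is built into \cref{defn:connected}. Symmetry is immediate from the shape of the definition: interchanging $i$ and $j$ swaps the first two conditions ($\sigma_{ij}' \neq 0 \leftrightarrow \sigma_{ji}' \neq 0$) and leaves the third unchanged, so $i \sim j$ if and only if $j \sim i$.

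For transitivity I would argue by contradiction. Suppose $i \sim j$ and $j \sim l$ with $i,j,l$ distinct (the degenerate cases being trivial) but $i \not\sim l$. Negating \cref{defn:connected} for the pair $(i,l)$ gives $\sigma_{il}' = \sigma_{li}' = 0$ together with $\sigma_{ki}'\sigma_{kl}' = 0$ for every $k$. The first step is to feed $\sigma_{il}' = \sigma_{li}' = 0$ into the integrability conditions \eqref{eqn:remainA}--\eqref{eqn:remainC} for the triple $(i,j,l)$: since every $\Phi_{ab} = \sigma_{ab} + \sigma_{ba}$ is nonvanishing, each equation collapses to a single product relation, namely $\sigma_{ji}'\sigma_{lj}' = 0$, $\sigma_{ij}'\sigma_{lj}' = 0$ and $\sigma_{ij}'\sigma_{jl}' = 0$. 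More generally, the identical reduction applied to any triple containing the ``broken'' pair $(i,l)$ and a third index $c$ yields $\sigma_{ci}'\sigma_{lc}' = 0$, $\sigma_{ic}'\sigma_{lc}' = 0$ and $\sigma_{ic}'\sigma_{cl}' = 0$; I would record this once as a reusable computation and invoke it repeatedly.

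Next I would split on whether each hypothesis $i \sim j$, $j \sim l$ is a strong connection (one of the first two clauses) or a weak one (the third clause, via some intermediary). In the strong/strong case the four sign patterns are dispatched directly: three of them contradict one of the product relations above, while the remaining pattern $\sigma_{ji}' \neq 0$, $\sigma_{jl}' \neq 0$ exhibits $j$ as a common intermediary and so forces $i \sim l$ outright. In the mixed strong/weak case, say $i \sim j$ strongly and $j \sim l$ weakly through $q$ (so $\sigma_{qj}', \sigma_{ql}' \neq 0$), I would use $\sigma_{ki}'\sigma_{kl}' = 0$ at $k=q$ together with the broken-pair relation at $c=q$ to get $\sigma_{qi}' = \sigma_{iq}' = 0$, and then evaluate \eqref{eqn:remainA} or \eqref{eqn:remainB} on the triple $(i,j,q)$; the vanishing entries annihilate all but one term, which is a nonzero product times a nonvanishing $\Phi$, the required contradiction.

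The main obstacle is the weak/weak case, where $i \sim j$ via $p$ and $j \sim l$ via $q$. Here I would first show the two intermediaries are forced to be distinct from one another and from $\{i,j,l\}$ (otherwise some $\sigma_{ki}'\sigma_{kl}' = 0$ fails), so that five distinct indices are in play and $n \geq 5$. Applying the broken-pair relations with third index $p$ and then $q$ produces a cascade of vanishing derivatives, culminating — via the broken pair $(i,q)$ with third index $p$ — in the crucial identity $\sigma_{qp}' = 0$. Finally I would extract from \eqref{eqn:remainB} on the triple $(p,j,q)$ its single surviving relation $\sigma_{qj}'\Phi_{qp} = \sigma_{qp}'\Phi_{jq}$; substituting $\sigma_{qp}' = 0$ and $\sigma_{qj}' \neq 0$ forces $\Phi_{qp} = 0$, which contradicts the nonvanishing of $\Phi_{pq}$. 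This contradiction closes the weak/weak case and completes the proof of transitivity.
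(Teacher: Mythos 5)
Your proof is correct, and it draws on the same toolkit as the paper's — a case split on whether each hypothesis $i\sim j$, $j\sim l$ holds strongly or weakly, with contradictions extracted from \eqref{eqn:remainA}--\eqref{eqn:remainC} after nonvanishing of the factors $\sigma_{uv}+\sigma_{vu}$ is used — but the organization is genuinely different. The paper argues directly through six cases, in each one either exhibiting a witness for $i\sim k$ or assuming non-connection and violating one of \eqref{eqn:remain}; crucially, its weak/weak case is not computed from scratch but is reduced to the earlier mixed cases by chaining through the intermediaries (the first intermediary is strongly connected to $j$, hence connected to the far endpoint by a previous case, and then the near endpoint is strongly connected to that intermediary). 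You instead run one global contradiction from $i\not\sim l$, package the consequences of a ``broken pair'' $\sigma_{uv}'=\sigma_{vu}'=0$ into a reusable computation, and in the weak/weak case carry out a direct five-index argument. Both routes close; the paper's reduction is shorter and avoids the $n\ge 5$ bookkeeping, while your broken-pair lemma makes the strong/strong and mixed cases cleaner and more uniform. One step you should make explicit: \eqref{eqn:remainB} on the triple $(p,j,q)$ is $\sigma_{qj}'\sigma_{pj}'(\sigma_{qp}+\sigma_{pq})-\sigma_{qj}'\sigma_{pq}'(\sigma_{pj}+\sigma_{jp})-\sigma_{pj}'\sigma_{qp}'(\sigma_{jq}+\sigma_{qj})=0$, so your ``single surviving relation'' requires $\sigma_{pq}'=0$ in addition to $\sigma_{qp}'=0$. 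This does follow from your own cascade — the derived broken pair $(l,p)$ with third index $q$ yields $\sigma_{ql}'\sigma_{pq}'=0$, and $\sigma_{ql}'\neq 0$ — but as written only $\sigma_{qp}'=0$ is named, and without the companion identity the quoted equation has a third term.
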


\begin{proof}
	We check that this relation is transitive, as reflexivity and symmetry are immediately verified. So suppose that $i \sim j$ and $j \sim k$ where $i$,$j$,$k$ are mutually distinct.
	\begin{parts}
		\item ($\sigma_{ji}' \neq 0 \text{ and }  \sigma_{jk}' \neq 0$) \\
				In this case $i$ and $k$ are weakly connected by $j$.
	
		\item ($\sigma_{ij}' \neq 0 \text{ and }  \sigma_{kj}' \neq 0$) \\
			Assume to the contrary that $\sigma_{ik}'=\sigma_{ki}'=0$, then \cref{eqn:remainB} can't be satisfied. Thus $i$ must be strongly connected to $k$.
	
		\item ($\sigma_{ij}' \neq 0 \text{ and }  \sigma_{jk}' \neq 0$ or $\sigma_{ji}' \neq 0 \text{ and }  \sigma_{kj}' \neq 0$) \\
		Assume first that $\sigma_{ij}' \neq 0 \text{ and }  \sigma_{jk}' \neq 0$ and to the contrary that $\sigma_{ik}'=0$, then \cref{eqn:remainC} can't be satisfied. Also the case where $\sigma_{ji}' \neq 0 \text{ and }  \sigma_{kj}' \neq 0$ is just a permutation of the first, so the same argument applies. Thus in either case $i$ must be strongly connected to $k$.
		
		\item ($\sigma_{ij}' \neq 0$ and $j$ and $k$ are weakly connected or $\sigma_{kj}' \neq 0$ and $i$ and $j$ are weakly connected) \\
		Suppose first that $\sigma_{ij}' \neq 0$ and $j$ and $k$ are weakly connected by $h$. So we have that $\sigma_{hj}',\sigma_{hk}' \neq 0$.
		
		If $h = i$ then $\sigma_{ik}' \neq 0$, so assume that $h \neq i$. If $\sigma_{hi}' \neq 0$ then $i$ and $k$ are weakly connected by $h$, so assume that $\sigma_{hi}' = 0$. If $\sigma_{ih}' \neq 0$ then by Case 3 we get that $\sigma_{ik}' \neq 0$, so assume further that $\sigma_{ih}' = 0$. Then after checking \cref{eqn:remainC} with the following coordinates, we get a contradiction. 
		\begin{align}
		h \rightarrow i \\
		i \rightarrow j \\
		j \rightarrow k
		\end{align}
		The case where $\sigma_{kj}' \neq 0$ and $i$ and $j$ are weakly connected is just a permutation of the first case. Thus we conclude that $i$ is connected to $k$.
		
		\item ($\sigma_{ji}' \neq 0$ and $j$ and $k$ are weakly connected or $\sigma_{jk}' \neq 0$ and $i$ and $j$ are weakly connected) \\
		Suppose first that $\sigma_{ji}' \neq 0$ and $j$ and $k$ are weakly connected by $h$. So we have that $\sigma_{hj}',\sigma_{hk}' \neq 0$.
				
				If $h = i$ then $\sigma_{ik}' \neq 0$, so assume that $h \neq i$. Since $\sigma_{hj}' \neq 0$ and $\sigma_{ji}' \neq 0$, by Case 3 we get that $\sigma_{hi}' \neq 0$. Thus $i$ and $k$ are weakly connected by $h$.
				
				The case where $\sigma_{jk}' \neq 0$ and $i$ and $j$ are weakly connected is just a permutation of the first case. Thus we conclude that $i$ is connected to $k$.
		
		\item ($i$ and $j$ are weakly connected and $j$ and $k$ are weakly connected) \\
		Suppose l satisfies $\sigma_{li}',\sigma_{lj}' \neq 0$ and $h$ satisfies $\sigma_{hj}',\sigma_{hk}' \neq 0$.
		
		If $h = l$ then $i$ and $k$ are clearly weakly connected, so assume that $h \neq l$.
		
		Note that $l$ is strongly connected to $j$ and $j \sim k$ then we can use one of the previous cases considered to find that $l \sim k$. Similarly because $i$ is strongly connected to $l$ and $l \sim k$ we find that $i \sim k$.
	\end{parts}
	
	Thus we conclude that $\sim$ is transitive and thus defines an equivalence relation.
\end{proof}

Now suppose that $\sim$ gives a single partition of the coordinates, i.e. the coordinates are connected. Our goal is to show that there must be at least one connecting coordinate. First we need a definition. We define $S$, called the set of strongly connected coordinates as follows:

\begin{equation}
	S \equiv \{\; \text{ $i$ : $i$ is strongly connected to every $j$}  \}
\end{equation}

The reason to make this definition is because $M \subseteq S$ (this inclusion might be proper in some cases which can be observed by inspecting KEM metrics derived by Kalnins-Miller \cite{Kalnins1986}). So the idea is to first show that $S \neq \emptyset$ since this is easier to do using the hypothesis of connectedness. It turns out that this is possible.

\begin{proposition}
	When the coordinates are connected, $S$ has at least one coordinate.
\end{proposition}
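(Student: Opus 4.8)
The plan is to reformulate the problem as a statement about a directed graph on the coordinate indices and then argue by contradiction. Write $i \to j$ whenever $\sigma_{ij}' \neq 0$; thus $i$ and $j$ are strongly connected exactly when $i \to j$ or $j \to i$, and $i \in S$ precisely when every other index is comparable to $i$ in this orientation. I say that $k$ \emph{dominates} $i$ if $k \to i$ but $i \not\to k$ (that is, $\sigma_{ki}' \neq 0$ and $\sigma_{ik}' = 0$); this relation is irreflexive and antisymmetric by construction. The whole argument rests on two structural facts extracted from the integrability system, after which a short finiteness argument closes the proof.

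The first and most important step is an \emph{arc-chain lemma}: if $\sigma_{ab}' \neq 0$ and $\sigma_{bc}' \neq 0$ for distinct $a,b,c$, then $a$ and $c$ are strongly connected. I would prove this by contradiction: assuming $\sigma_{ac}' = \sigma_{ca}' = 0$ and substituting $(i,j,k) = (a,b,c)$ into \cref{eqn:remainC} makes the two terms carrying the factor $\sigma_{ac}'$ vanish, leaving $\sigma_{bc}'\sigma_{ab}'\Phi_{ac} = 0$; since each $\Phi_{ac} = \sigma_{ac}+\sigma_{ca}$ is nonvanishing this forces $\sigma_{ab}'\sigma_{bc}' = 0$, a contradiction. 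Two consequences follow. First, if $i,j$ are not strongly connected but are weakly connected by $k$ (so $k \to i$ and $k \to j$), then $k$ dominates both: were $i \to k$, the chain $i \to k \to j$ would make $i,j$ strongly connected, and symmetrically for $j$. Second, domination composes: if $a$ dominates $b$ and $b$ dominates $c$, then since $\sigma_{ab}', \sigma_{bc}' \neq 0$ the lemma makes $a,c$ strongly connected, and feeding the triangle $\{a,b,c\}$ into the determinant relation \cref{eq:remainDet} (where the term $\sigma_{ba}'\sigma_{cb}'\sigma_{ac}'$ drops out because $\sigma_{ba}' = \sigma_{cb}' = 0$) forces $\sigma_{ca}' = 0$, whence $a \to c$ and $a$ dominates $c$.

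With these in hand I would argue by contradiction. Suppose $S = \emptyset$. Then every index $v$ has some index $a$ to which it is not strongly connected; the single-partition hypothesis guarantees $v \sim a$, so they are weakly connected by some $k$, and by the first consequence above $k$ dominates $v$. Hence \emph{every} index is dominated. Building a sequence $v_0, v_1, v_2, \dots$ with $v_{t+1}$ a dominator of $v_t$ and using finiteness of the index set, some index repeats, producing a cycle $u_1, \dots, u_r$ in which $u_t$ dominates $u_{t+1}$ cyclically. Collapsing this cycle by the composition property yields that $u_1$ dominates $u_r$ while $u_r$ dominates $u_1$, contradicting antisymmetry of domination (and the case $r = 2$ is itself already such a contradiction). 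Therefore $S \neq \emptyset$.

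The main obstacle I anticipate is the arc-chain lemma and, more delicately, pinning down exactly which of the three cyclic equations \cref{eqn:remainA,eqn:remainB,eqn:remainC} and which specialization of the cubic \cref{eq:remainDet} yields each orientation conclusion; the bookkeeping of which derivatives vanish must be done carefully, since a wrong index placement makes the relevant equation degenerate and gives no information. Once the two structural facts are correctly established, the graph-theoretic finish is routine.
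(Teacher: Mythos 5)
Your proof is correct, but it takes a genuinely different route from the paper's. Both arguments open identically: assuming $S=\emptyset$ and connectedness, every coordinate $v$ is weakly connected to some $a$ through a $k$ with $\sigma_{kv}',\sigma_{ka}'\neq 0$, and the field equations then force $\sigma_{vk}'=\sigma_{ak}'=0$ (your ``first consequence''; the paper extracts this from \cref{eqn:remainA,eqn:remainB}, while your arc-chain lemma is precisely Case~3 of the paper's proof that $\sim$ is transitive, resting on \cref{eqn:remainC}). The two proofs then diverge. The paper runs a set-growing induction: it maintains a set $A$ together with an external coordinate $f$ such that $\sigma_{if}'=0$ for all $i\in A$, enlarges $A$ by two coordinates at each step while updating $f$, and reaches a contradiction once $A\cup\{f\}$ exhausts all coordinates, since $f$ must then be pointed at from inside $A$. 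You instead make the order-theoretic structure explicit: domination is irreflexive, antisymmetric, and (by your second consequence) transitive, so it admits no cycles on a finite index set, yet $S=\emptyset$ forces every index to be dominated. Your transitivity step is sound as written via \cref{eq:remainDet}; note it also follows more directly from \cref{eqn:remainA} with $(i,j,k)=(a,b,c)$, since $\sigma_{ba}'=0$ kills all terms except $\sigma_{ca}'\sigma_{bc}'(\sigma_{ab}+\sigma_{ba})$, forcing $\sigma_{ca}'=0$. The trade-off: your version isolates the algebra into two clean local lemmas and finishes with a routine no-cycles argument, at the cost of needing transitivity of domination; the paper's version never needs that lemma but pays for it with heavier bookkeeping in the inductive claim. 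Both are complete proofs.
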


\begin{proof}
	Suppose to the contrary that $S = \emptyset$. Then $\forall  i$, there exists $j,k$ with $ i,j,k$ distinct such that
	
	\begin{equation} \label{eq:noStrongCon}
		\sigma_{ij}'=\sigma_{ji}'=0 \text{ and } \sigma_{ki}',\sigma_{kj}' \neq 0
	\end{equation}
	
	So fix some $i$, and choose $j,k$ satisfying \cref{eq:noStrongCon}. Then by \cref{eqn:remainA} we must have that $\sigma_{jk}'=0$, similarly by \cref{eqn:remainB} we must have that $\sigma_{ik}'=0$. Now let $A = \{i,j\}$ then note that $k \notin A$ and $\forall \: l \in A, \:\: \sigma_{lk}' = 0$.
	
	\begin{claim}
		Suppose we have a coordinate $f$ and a set of coordinates $A \neq \emptyset$ such that $f \notin A$ and $\forall i \in A, \:\: \sigma_{if}' = 0$. Furthermore assume that $\{f\} \cup A \neq \{1,...,n\}$. Then we can obtain a new set $A'$ such that $A \cup \{f\} \subseteq A'$ and an $h \notin A'$ such that $\forall i \in A', \:\: \sigma_{ih}' = 0$.
	\end{claim}
	\begin{proof}[Proof of claim]
		By assumption there exists $g,h$ satisfying \cref{eq:noStrongCon} with $f$ in place of $i$ and $g$ in place of $j$. Since $\sigma_{hf}' \neq 0$, $h \notin A$. If $\{f\} \cup A = \{1,...,n\}$, then we have reached a contradiction, so assume otherwise. As we observed earlier for a similar case, we must have $\sigma_{gh}'=\sigma_{fh}'=0$. Also $\forall i \in A$ since $\sigma_{if}'=0$ and $\sigma_{hf}' \neq 0$ by evaluating \cref{eqn:remainB} with $i \rightarrow i, f \rightarrow j, h \rightarrow k$ we find that $\sigma_{ih}'=0$.
			
		Thus if we let $A' = A \cup \{g,f\}$ then $\forall  i \in A', \sigma_{ih}'=0$. Also note that $|A'| > |A|$ and $h \notin A'$.
	\end{proof}
	
	Now we can inductively apply Claim 1 to get a set of coordinates $A \neq \emptyset$, an $f \notin A$ such that $\forall i \in A, \:\: \sigma_{if}' = 0$ and $\{f\} \cup A = \{1,...,n\}$. Then by assumption there must exist a coordinate $g$ such that $f$ is weakly connected to $g$. So there is a coordinate $h$, with $h \neq f$, such that $\sigma_{hf}' \neq 0$. Since $\{f\} \cup A = \{1,...,n\}$, $h \in A$, thus $\sigma_{hf}' = 0$, a contradiction.

	Thus $S \neq \emptyset$.
\end{proof}

Then assuming $S \neq \emptyset$ we try to prove that $M \neq \emptyset$. This is also possible.

\begin{proposition} \label{prop:ConCoords}
	When the coordinates are connected and $S$ has at least one coordinate then there must be at least one connecting coordinate. Thus due to the previous proposition we find that when the coordinates are connected there must be at least one connecting coordinate.
\end{proposition}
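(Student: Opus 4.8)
The plan is to induct on the number of coordinates $n$, feeding the inductive hypothesis with the help of the previous proposition. Fix a strongly connected coordinate $i \in S$ and split the remaining coordinates into $P = \{k : \sigma_{ik}' \neq 0\}$ and $Q = \{j : \sigma_{ij}' = 0\}$; since $i \in S$, every $j \in Q$ satisfies $\sigma_{ji}' \neq 0$. If $Q = \emptyset$ then $\sigma_{ij}' \neq 0$ for all $j \neq i$, so $i$ is itself a connecting coordinate and we are done. This also serves as the termination of the induction, so assume $Q \neq \emptyset$; the goal becomes to locate a connecting coordinate inside $Q$ and then lift it to the whole system.

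First I would establish a complete domination of $P$ by $Q$. Fix $j \in Q$ and $k \in P$, so that $i,j,k$ are distinct. Since $\sigma_{ij}' = 0$, the determinant relation \cref{eq:remainDet} collapses to $\sigma_{ji}'\sigma_{kj}'\sigma_{ik}' = 0$, and as $\sigma_{ji}' \neq 0$ and $\sigma_{ik}' \neq 0$ this forces $\sigma_{kj}' = 0$: there are no edges from $P$ into $Q$. Substituting $\sigma_{ij}' = 0$ into \cref{eqn:remainC} likewise gives $\sigma_{ik}'(\sigma_{jk}'\Phi_{ij} - \sigma_{ji}'\Phi_{jk}) = 0$, where $\Phi_{ab} = \sigma_{ab}+\sigma_{ba} \neq 0$; since $\sigma_{ik}' \neq 0$ while the term $\sigma_{ji}'\Phi_{jk}$ is nonzero, we conclude $\sigma_{jk}' \neq 0$. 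Thus every coordinate of $Q$ has a nonzero $\sigma'$ onto every coordinate of $P$ (and onto $i$), so the only coordinates to which a $j \in Q$ can fail to be connected are the other elements of $Q$.

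The crucial consequence is that $Q$ receives no incoming edges from outside: for $j \in Q$, neither $i$ (as $\sigma_{ij}' = 0$) nor any $k \in P$ (as $\sigma_{kj}' = 0$) satisfies $\sigma_{\cdot j}' \neq 0$. Hence for two coordinates of $Q$, every clause of \cref{defn:connected}---including a weak connection through a third coordinate $h$, which would require $h$ to have an out-edge into $Q$---can only be witnessed by a coordinate $h \in Q$. Therefore $\sim$ restricted to $Q$ coincides with the connectedness relation computed within $Q$ alone. Because the whole system is a single $\sim$-class, any two elements of $Q$ are $\sim$-related and so internally related, whence $Q$ is itself connected. Moreover \cref{eqn:remain,eq:remainDet} hold for every triple drawn from $Q$, so $Q$ is a genuine smaller instance of the same setup, and the previous proposition applies to give a coordinate of $Q$ strongly connected to all of $Q$; that is, the strongly-connected set computed inside $Q$ is nonempty.

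With $Q$ connected, of size $|Q| < n$, and with nonempty strongly-connected set, the inductive hypothesis produces a coordinate $\ell \in Q$ that is connecting within $Q$, i.e.\ $\sigma_{\ell j}' \neq 0$ for every $j \in Q \setminus \{\ell\}$. It remains to lift: $\sigma_{\ell i}' \neq 0$ because $\ell \in Q$, and $\sigma_{\ell k}' \neq 0$ for every $k \in P$ by the domination established above, so $\sigma_{\ell j}' \neq 0$ for every $j \neq \ell$ and $\ell$ is a connecting coordinate for the full system. I expect the main obstacle to be the two structural facts about $Q$---that it is a closed, self-contained subsystem and that connectedness of the whole forces connectedness of $Q$---since these are exactly what make the recursion and the appeal to the previous proposition legitimate; the derivations of $\sigma_{kj}' = 0$ and $\sigma_{jk}' \neq 0$ from \cref{eq:remainDet,eqn:remainC} are routine once the partition $\{P,Q\}$ is in place.
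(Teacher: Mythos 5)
Your argument is correct, but it is organized quite differently from the paper's. The paper argues by contradiction: assuming no connecting coordinate exists, it picks $i\in S$ together with a witness $j$ satisfying $\sigma_{ij}'=0$, $\sigma_{ji}'\neq 0$, and then grows a set $B\subseteq S$ all of whose members have vanishing $\sigma'$ toward a single common external coordinate, showing at each stage (via \cref{eqn:remain}) that the absorbed coordinate again lies in $S$ and supplies a fresh external witness; the contradiction arrives when $B$ exhausts the coordinates. You instead give a direct recursive argument: fixing $i\in S$, you split the rest into $P=\{k:\sigma_{ik}'\neq 0\}$ and $Q=\{j:\sigma_{ij}'=0\}$, use \cref{eq:remainDet} and \cref{eqn:remainC} to show $Q$ dominates $P\cup\{i\}$ while receiving no edges from outside, conclude that $Q$ is an autonomous connected subsystem, and recurse. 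Your local deductions are the same ones the paper uses (they are essentially Cases 2 and 3 of the transitivity proof), and I checked that the domination claims, the closure of $Q$ under all three clauses of \cref{defn:connected}, and the final lifting step all go through. What your route buys is a constructive location of the connecting coordinate and extra structural information --- the nested chain of dominating subsets in fact foreshadows the ordering of factors in \cref{prop:OneConCoord} --- at the cost of one presentational obligation you should make explicit: both this proposition and the previous one must be reread as statements about an abstract collection of functions $\sigma_{ij}$ satisfying \cref{eqn:remain} with $\sigma_{ij}+\sigma_{ji}\neq 0$ (rather than about a separable metric per se) before they can legitimately be applied to the sub-collection $Q$, and the degenerate cases $|Q|\leq 2$, where \cref{eqn:remain} is vacuous, need the one-line separate treatment you only gesture at.
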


\begin{proof}
	Assume to the contrary that $M = \emptyset$. Then $\forall i \in S$ there exists $j$ such that
	
	\begin{equation} \label{eq:noSimpleCoords}
	\sigma_{ij}' = 0 \text{ and } \sigma_{ji}' \neq 0
	\end{equation}
	
	Since $S \neq \emptyset$ by hypothesis, we can choose some $i \in S$ and some $j \neq i$ such that \cref{eq:noSimpleCoords} is satisfied. Let $B = \{i\}$.
	
	\begin{claim}
		Suppose $\emptyset \neq B \subseteq S$ and there is a $j \notin B$ such that $\forall i \in B$, $\sigma_{ij}'=0 $. Furthermore assume that $\{j\} \cup B \neq \{1,...,n\}$. Then we can obtain a new set $B' = \{j\} \cup B \subseteq S$ and a $k \notin B'$ such that $\forall i \in B', \:\: \sigma_{ik}'=0$.
	\end{claim}
	
	\begin{proof}[Proof of claim]
		Fix an $i \in B$, then $\sigma_{ij}'=0$ and $\sigma_{ji}' \neq 0$. Now pick $k \neq i,j$, then $\sigma_{ik}' \neq 0$ or $\sigma_{ki}' \neq 0$.
		
		If $ \sigma_{ik}' \neq 0 $ then by \cref{eqn:remainC} we must have that $ \sigma_{jk}' \neq 0 $. If $ \sigma_{ki}' \neq 0 $ then by \cref{eqn:remainA} either $ \sigma_{jk}' \neq 0 $ or $ \sigma_{kj}' \neq 0 $. In either case we find that $j$ is strongly connected to $k$. Since $k$ was arbitrary and because $j$ is also strongly connected to $i$ we find that $j \in S$. Then by \cref{eq:noSimpleCoords} there exists $\exists \: k \neq i,j$ such that $\sigma_{jk}'=0$ and $\sigma_{kj}' \neq 0$, note that $k \notin B$.
		
		Assume $i \in B$ is arbitrary, then $\sigma_{ij}'=0$ and $\sigma_{kj}' \neq 0$, thus by \cref{eqn:remainB} we must have that $\sigma_{ik}'=0$. 
		
		Let $B' = B \cup \{j\} \subseteq S$ then $\forall i \in B'$ we have $\sigma_{ik}'=0$, also note that $k \notin B'$.
	\end{proof}
	
	Now we can inductively apply Claim 1 to a get set $B$ satisfying $\emptyset \neq B \subseteq S$ and a $j \notin B$ such that $\forall i \in B$, $\sigma_{ij}'=0 $ and $\{j\} \cup B = \{1,...,n\}$. As in the proof of the Claim 1, we find that $j \in S$. Then by \cref{eq:noSimpleCoords} $\exists \: k \neq j$ such that $\sigma_{kj}' \neq 0$, but since $\{j\} \cup B = \{1,...,n\}$ we must have that $k \in B$, then $\sigma_{kj}' = 0$, a contradiction.
	
	Thus $M \neq \emptyset$.
\end{proof}

The following proposition classifies all metrics with at least one connecting coordinate.

\begin{proposition} \label{prop:OneConCoord}
	If the metric has at least one connecting coordinate then the following statements are true. For $a \in M$:
	
	\begin{equation}
		H_{a}^{2} = \Phi_{a} \prod\limits_{ \substack{  b \in M  \\ b \neq a}} (\sigma_{a} -\sigma_{b})
	\end{equation}
	
	If $m > 1$ then one can partition the coordinates in\footnote{If $Y \subseteq X$, then we denote the complement of $Y$ in $X$ (elements of $X$ not in $Y$) as $Y^{c}$.} $M^{c}$ such that if $I$ is an equivalence class of this partition and $\alpha \in I$, then

	\begin{align}
		H_{\alpha}^{2} = \Phi_{\alpha} \prod\limits_{\substack{\beta \in I \\ \beta \neq \alpha}} (\sigma_{\alpha \beta} + \sigma_{\beta \alpha}) \prod\limits_{a =1}^{m} (e_{I} -\sigma_{a}) \quad (m \geq 2)
	\end{align}
	
	If $m = 1$ then one can partition the coordinates in $M^{c}$ such that if $I$ is an equivalence class of this partition and $\alpha \in I$, then

	\begin{align}
		H_{\alpha}^{2} = \Phi_{\alpha} \sigma_{a}^{I} \prod\limits_{\substack{\beta \in I \\ \beta \neq \alpha}} (\sigma_{\alpha \beta} + \sigma_{\beta \alpha})\quad (m = 1)
	\end{align}
	
	Equations~\eqref{eqn:remain} are satisfied whenever at least one of $i$, $j$ or $k$ is in $M$ if and only if the functions $H_{i}^{2}$ are of the form just described. Furthermore Equations~\eqref{eqn:remain} are satisfied whenever $i,j,k$ are not all in the same partition.
\end{proposition}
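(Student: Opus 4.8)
The plan is to pin down the functions $\sigma_{ij}$ appearing in \cref{HForm} by exploiting the determinant equation \cref{eq:remainDet} together with the full system \cref{eqn:remain}, organising the analysis according to whether an index lies in $M$ (connecting) or in $M^{c}$. Throughout I will use the gauge freedom in \cref{HForm}: each factor $\Phi_{ij}=\sigma_{ij}+\sigma_{ji}$ may be rescaled by a nonzero constant (compensating in $\Phi_{i}$ and $\Phi_{j}$), and one may add a constant to $\sigma_{ij}$ while subtracting it from $\sigma_{ji}$, neither operation changing any $H_{k}^{2}$. The overall logic is: first derive proportionality relations among the $\sigma_{ij}'$ by separating variables in \cref{eq:remainDet}; then normalise with the available freedom to obtain the stated forms; finally verify the converse and the ``furthermore'' by direct substitution.

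For a connecting coordinate $a$ every $\sigma_{ab}'$ with $b\neq a$ is nonzero, so \cref{eq:remainDet} applied to a triple $a,b,c\in M$ may be divided through; the resulting factors depend on $x_{a}$, $x_{b}$, $x_{c}$ separately, forcing each ratio $\sigma_{ab}'/\sigma_{ac}'$ to be constant. Hence for fixed $a$ all the $\sigma_{ab}'$ are proportional to a single $\theta_{a}'(x_{a})$, i.e. $\sigma_{ab}=\mu_{ab}\theta_{a}+\nu_{ab}$ with constants $\mu_{ab},\nu_{ab}$. Substituting this into \cref{eqn:remain} and collecting the coefficients of $\theta_{i},\theta_{j},\theta_{k}$, two coefficients vanish identically, the third reproduces the multiplicative relation $\mu_{ji}\mu_{kj}\mu_{ik}+\mu_{ki}\mu_{jk}\mu_{ij}=0$, and the constant term yields an additive relation among the $\nu$'s. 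These are precisely a multiplicative and an additive cocycle condition on index pairs; I would show that after rescaling the $\theta_{a}$ (to achieve $\mu_{ab}=-\mu_{ba}$, which the multiplicative condition permits up to a resolvable sign) and redefining the $\sigma_{a}$ (to absorb the constants, which the additive condition permits), each factor becomes $\sigma_{ab}+\sigma_{ba}=\sigma_{a}-\sigma_{b}$ for $a,b\in M$.

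The delicate step is the warping. The factor $\Phi_{a\alpha}$ with $a\in M$, $\alpha\in M^{c}$ occurs in both $H_{a}^{2}$ and $H_{\alpha}^{2}$, so to reach the stated forms I must show that $\sigma_{\alpha a}$ is constant: then $\Phi_{a\alpha}$ depends on $x_{a}$ only and is absorbed into $\Phi_{a}$, reducing $H_{a}^{2}$ to $\Phi_{a}\prod_{b\in M}(\sigma_{a}-\sigma_{b})$, while the same factor contributes the term $e_{I}-\sigma_{a}$ to $H_{\alpha}^{2}$. I would prove $\sigma_{\alpha a}'=0$ by feeding the defining property of $\alpha\notin M$ (existence of $\gamma$ with $\sigma_{\alpha\gamma}'=0$) into \cref{eq:remainDet} and \cref{eqn:remain}, using the connectedness established in the previous propositions to run the remaining cases, and then checking that the resulting constant is independent of $a$ and of $\alpha$ within a block, which is the single constant $e_{I}$. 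This is exactly where the dichotomy $m\geq 2$ versus $m=1$ appears: with two connecting coordinates the cross-equations force $e_{I}$ to be constant and $\sigma_{a\alpha}=-\sigma_{a}$, whereas with a single connecting coordinate the warping function $\sigma_{1}^{I}(x_{1})$ is left free, producing the irregular form. I expect this to be the main obstacle, since excluding a fiber variable from a warping factor is not a single separation of variables but a case analysis across several triples.

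Finally I would fix the partition of $M^{c}$ as the transitive closure of strong connection restricted to $M^{c}$; for $\alpha,\beta$ in distinct blocks a short argument with \cref{eqn:remain} gives $\sigma_{\alpha\beta}'=\sigma_{\beta\alpha}'=0$, so those factors are functions of a single variable and are absorbed, decoupling the blocks into the sub-metrics $g^{I}$. Assembling the three steps gives the asserted expressions for $H_{a}^{2}$ and $H_{\alpha}^{2}$. For the converse and the ``furthermore'' I would substitute the claimed forms back into \cref{eqn:remain}: when at least one index lies in $M$ the equations collapse to the antisymmetric identities among the $\sigma_{a}$ already used, giving the ``if and only if''; and when $i,j,k$ do not all lie in one block of $\{M,I_{1},\dots,I_{p}\}$, at least one of the relevant $\sigma$'s is constant in the variable being differentiated, so every term of each equation in \cref{eqn:remain} vanishes.
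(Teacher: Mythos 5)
Your proposal follows essentially the same route as the paper's proof: separating variables in \cref{eq:remainDet} to make the ratios $\sigma_{ab}'/\sigma_{ac}'$ constant for connecting coordinates, proving $\sigma_{\alpha a}'=0$ for $\alpha\in M^{c}$, $a\in M$ via the defining property of non-connecting coordinates fed into \cref{eqn:remain}, extracting the warping factors $e_{I}-\sigma_{a}$ (with the $m=1$ degeneration) from the cross-equations, and verifying the converse by substitution. The only cosmetic differences are that you describe the partition of $M^{c}$ as the transitive closure of strong connection rather than by the value of $e_{\alpha}$ (equivalent for the purposes of the statement) and that you track the additive constants as an explicit cocycle where the paper absorbs them inductively.
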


\begin{proof}
	By hypothesis we can assume $m \geq 1$. We use Latin letters such as $a$ to denote the connecting coordinates and the remaining coordinates are denoted with Greek letters such as $\alpha$. Although $i$ and $j$ are reserved for arbitrary coordinates. Furthermore we denote $N = \{1,...,n\}$. Then by definition $\forall \: a \in M , i \in N$ we have that $\sigma_{a i}' \neq 0$.
	
	\begin{claim}
		For $a \in M$ and $\alpha \in M^{c}$, $\sigma_{\alpha a}' = 0$.
	\end{claim}
	\begin{proof}
		For any $\alpha \in M^{c}$, there exists $i \in N$ such that $\sigma_{\alpha i}' = 0$.
		
		Suppose first that $i \in M$ and let $a = i$. Suppose to the contrary that there exists $b \in M \setminus \{a\}$ such that $\sigma_{\alpha b}' \neq 0$. Then \cref{eqn:remainB} can't hold with $\alpha \rightarrow i, a \rightarrow j, b \rightarrow k$. Thus the claim holds in this case.
		
		If $i \in M^{c}$, let $\beta = i$. If the first case doesn't hold for $\alpha$, then $\sigma_{\alpha a}' \neq 0$ $\forall \: a \in M$. Fix $a \in M$, then \cref{eqn:remainB} can't hold with $\alpha \rightarrow i, \beta \rightarrow j, a \rightarrow k$. Thus the first case must hold for some $a \in M$, thus the claim must hold.
	\end{proof}
	
	The proof for the following claim is mainly from \cite[P.~292]{Eisenhart1934}.
	\begin{claim}
		For $a \in M$, the following holds
		\begin{equation}
			H_{a}^{2} = \Phi_{a} \prod\limits_{ \substack{  b \in M  \\ b \neq a}} (\sigma_{a} -\sigma_{b})
		\end{equation}
		where each $\sigma_{a}(x_{a})$.
	\end{claim}
	\begin{proof}
		Suppose first that $m = 1$ and let $a \in M$. Then for $\alpha \in M^{c}$ by the above claim we know that $\sigma_{a \alpha} + \sigma_{\alpha a}$ only depends on the $a$ coordinate and so these factors can be absorbed into $\Phi_{a}$ and $\Phi_{\alpha}$. Thus the claim holds in this case.
		
		So for the remainder of the proof of this claim assume that $m > 1$. To prove this statement, for $a, b \in M$ our goal is to remove the $b$ dependence from $\sigma_{ab}$. First assume $m > 2$ and let $a,b,c \in M$. From \cref{eq:remainDet} evaluated with $a \rightarrow i, b \rightarrow j, c \rightarrow k$ we get:
		
		\begin{equation}
			\sigma_{ab}'\sigma_{bc}'\sigma_{ca}' + \sigma_{ba}'\sigma_{cb}'\sigma_{ac}' = 0
		\end{equation}
		
		Since each term is non-zero, by separating variables it follows that $\dfrac{\sigma_{ab}'}{\sigma_{ac}'}$ is a constant. Thus we can set $\sigma_{ab} = a_{ab} \sigma_{a}$ where $a_{ab}$ is a constant and $\sigma_{a}$ involves $x_{a}$ at most. The above equation implies that the constants must satisfy the following:
		
		\begin{equation} \label{eq:constsCond}
			a_{ab}a_{bc}a_{ca} + a_{ba}a_{cb}a_{ac} = 0
		\end{equation}
		
		Assuming the above equation holds, it follows that all three Equations~\eqref{eqn:remain} are satisfied for $a \rightarrow i, b \rightarrow j, c \rightarrow k$. Now set
		
		\begin{align}
			\sigma_{a} &= a_{bc}a_{ca} \ext{\sigma}_{a} & \sigma_{b} &= a_{cb}a_{ac} \ext{\sigma}_{b}
		\end{align}
		
		Then \cref{eq:constsCond} implies that $a_{ab} \sigma_{a} + a_{ba}\sigma_{b} = a_{ab}a_{bc}a_{ca}(\ext{\sigma}_{a} - \ext{\sigma}_{b})$; in which case the constant factor may be absorbed into $\Phi_{a}$ and $\Phi_{b}$. Thus we can assume $a_{ab} = - a_{ba} = 1$, then \cref{eq:constsCond} becomes:
		
		\begin{equation}
			a_{bc}a_{ca} - a_{cb}a_{ac} = 0
		\end{equation}
		
		Now set $a_{ca} \sigma_{c} = - a_{ac} \ext{\sigma}_{c}$, then we have:
		
		\begin{equation}
			a_{ca} \sigma_{c} + a_{ac} \ext{\sigma}_{a} = a_{ac}(\ext{\sigma}_{a} - \ext{\sigma}_{c})
		\end{equation}
		
		Thus we can assume $a_{ac} = - a_{ca} = 1$. Then $a_{bc} \ext{\sigma}_{b} + a_{cb} \ext{\sigma}_{c} = a_{bc} (\ext{\sigma}_{b} - \ext{\sigma}_{c})$ and so we can assume $a_{bc} = - a_{cb} = 1$. Inductively, this process can be continued so that each $\sigma_{ab} = \pm \sigma_{a}$, where the sign is positive if $\sigma_{ab}$ appears in $H_{a}^{2}$ and is negative if $\sigma_{ab}$ appears in $H_{b}^{2}$.
		
		If $m = 2$ then we can define $\sigma_{a} = \sigma_{ab}$ and $\sigma_{b} = - \sigma_{ba}$ without loss of consistency. For $\alpha \in M^{c}$, $\sigma_{\alpha a}' = 0$, so we can absorb terms of the form $\sigma_{\alpha a} + \sigma_{a \alpha}$ into $\Phi_{a}$ and $\Phi_{\alpha}$. Thus we have proven the following:
		
		\begin{equation}
			H_{a}^{2} = \Phi_{a} \prod\limits_{ \substack{  b \leq m  \\ b \neq a}} (\sigma_{a} -\sigma_{b})
		\end{equation}
	\end{proof}
	
	We can now assume that Equations~\eqref{eqn:remain} have been solved whenever $i,j,k \in M$. Thus if $m = n$ the above claim proves that the metric has the form given by \cref{eq:irredMetric} and so we are finished. So assume for the remainder of the proof that $m < n$.
	
	Now fix $a,b \in M$ and $\alpha \in M^{c}$. Let $a_{\alpha a} = \sigma_{\alpha a} \in \R$ and $a_{\alpha b} = \sigma_{\alpha b} \in \R$. Then \cref{eqn:remainC} evaluated with $a \rightarrow i, b \rightarrow j, \alpha \rightarrow k$ gives:
	
	\begin{equation}
		\sigma_{a\alpha}'\sigma_{b\alpha}'(\sigma_{a} - \sigma_{b}) + \sigma_{a\alpha}'\sigma_{b}'(\sigma_{b\alpha} + a_{\alpha b}) - \sigma_{b\alpha}'\sigma_{a}'(a_{\alpha a} + \sigma_{a\alpha}) = 0
	\end{equation}
	
	We now proceed to solve the above equations. First we rearrange terms to separate the variables:
	
	\begin{align}
		(\sigma_{a} - \sigma_{b}) + \frac{\sigma_{b}'}{\sigma_{b\alpha}'}(\sigma_{b\alpha} + a_{\alpha b}) - \frac{\sigma_{a}'}{\sigma_{a\alpha}'}(a_{\alpha a} + \sigma_{a\alpha}) = 0 \label{eq:constMultpTemp} \\
		\Rightarrow \sigma_{a} - \frac{\sigma_{a}'}{\sigma_{a\alpha}'}(a_{\alpha a} + \sigma_{a\alpha}) = \sigma_{b} - \frac{\sigma_{b}'}{\sigma_{b\alpha}'}(\sigma_{b\alpha} + a_{\alpha b}) = c \in \R
	\end{align}
	
	Then one can show that $\dfrac{\sigma_{a}'}{\sigma_{a\alpha}'} = - d \in \R \setminus \{0\}$ and similarly $\dfrac{\sigma_{b}'}{\sigma_{b\alpha}'} = -f \in \R \setminus \{0\}$. Thus the above equation implies:
	
	\begin{align}
		\sigma_{a} & = - d (a_{\alpha a} + \sigma_{a \alpha} - \frac{c}{d}) \\
		\sigma_{b} & = - f (a_{\alpha b} + \sigma_{b \alpha} - \frac{c}{f})
	\end{align}
	
	Now let $e_{\alpha} = c$ then the two equations above implies the following:
	
	\begin{align}
		a_{\alpha a} + \sigma_{a \alpha} & = \frac{e_{\alpha} - \sigma_{a}}{d} \\
		a_{\alpha b} + \sigma_{b \alpha}  & = \frac{e_{\alpha} - \sigma_{b}}{f}
	\end{align}
	
	Thus by absorbing the constants $d,f$ into the $\Phi$ functions we can assume $a_{\alpha a} = a_{\alpha b} = e_{\alpha}$, $\sigma_{a \alpha} = - \sigma_{a}$ and $\sigma_{b \alpha} = - \sigma_{b}$. With these assumptions, it follows that all three Equations~\eqref{eqn:remain} are satisfied for $a \rightarrow i, b \rightarrow j, \alpha \rightarrow k$. Thus we conclude that Equations~\eqref{eqn:remain} hold whenever $i,j \in M$ and $k \in M^{c}$.
	
	Suppose $m > 1$, we just observed that for $\alpha \in M^{c}$ and $a \in M$ that $\sigma_{\alpha a} = e_{\alpha}$. Thus we can partition the $\alpha \in M^{c}$ by the value $e_{\alpha}$. We consider $\alpha, \beta \in M^{c}$ to be in the same equivalence class, say $I$, if $e_{\alpha} = e_{\beta}$. We define $e_{I}$ such that $\alpha \in I$ implies that $e_{\alpha} = e_{I}$. We denote these equivalence classes by $I$ and $J$.
	
	Suppose $a \in M$ and $\alpha \in I, \beta \in J$. We now check Equations~\eqref{eqn:remain} for $a \rightarrow i, \alpha \rightarrow j, \beta \rightarrow k$. Since $\sigma_{\alpha a}' = \sigma_{\beta a}' = 0$, \cref{eqn:remainA} is satisfied. Equation \eqref{eqn:remainB} and \eqref{eqn:remainC} reduce to the following:
	
	\begin{align}
		\sigma_{\beta \alpha}'(\sigma_{a \alpha}'(e_{\beta}+ \sigma_{a \beta})  - \sigma_{a \beta}'(\sigma_{a \alpha} + e_{\alpha})) & = 0 \label{eq:remainBtemp} \\
		\sigma_{\alpha \beta}'(\sigma_{a \alpha}'(e_{\beta}+ \sigma_{a \beta})  - \sigma_{a \beta}'(\sigma_{a \alpha} + e_{\alpha})) & = 0 \label{eq:remainCtemp}
	\end{align}
	
	Now we can use the fact that $\sigma_{a \alpha} = \sigma_{a \beta} = - \sigma_{a}$ to get the following:
	
	\begin{align}
		\sigma_{\beta \alpha}'(e_{\alpha} - e_{\beta}) & = 0 \\
		\sigma_{\alpha \beta}'(e_{\alpha} - e_{\beta}) & = 0
	\end{align}
	
	If $I = J$ then $e_{\alpha} = e_{\beta}$ and the above equations are satisfied. If $I \neq J$ then we must have that $\sigma_{\alpha \beta}' = \sigma_{\beta \alpha}' = 0$; in which case $\sigma_{\alpha \beta} + \sigma_{\beta \alpha}$ can be absorbed into the $\Phi$ functions. Thus we have proven the following: if $\alpha \in I$ then
	
	\begin{align}
		H_{\alpha}^{2} = \Phi_{\alpha} \prod\limits_{\substack{\beta \in I \\ \beta \neq \alpha}} (\sigma_{\alpha \beta} + \sigma_{\beta \alpha}) \prod\limits_{a =1}^{m} (e_{I} -\sigma_{a}) \quad (m \geq 2)
	\end{align}
	
	Now suppose $m = 1$, then we can pullback to a submanifold given by $x_{1} = \operatorname{constant}$ and then partition the coordinates in $M^{c}$ into connected components. We denote these equivalence classes by $I$ and $J$. Let $a \in M$ and $\alpha \in I, \beta \in J$. As for the case $m > 1$ one can see that \cref{eqn:remainA} is satisfied. Furthermore Equation \eqref{eqn:remainB} and \eqref{eqn:remainC} reduce to Equations \eqref{eq:remainBtemp} and \eqref{eq:remainCtemp} above. If $I \neq J$ then $\sigma_{\beta \alpha}' = \sigma_{\alpha \beta}' = 0$, thus Equations \eqref{eq:remainBtemp} and \eqref{eq:remainCtemp} are both satisfied. Otherwise assume $I = J$ and $\alpha,\beta \in I$ satisfy $\sigma_{\alpha \beta}' \neq 0$ for the moment, then \cref{eq:remainCtemp} implies
	
	\begin{equation}
		\frac{\sigma_{a \alpha}'}{\sigma_{a \beta}'} (e_{\beta}+ \sigma_{a \beta})  - \sigma_{a \alpha} + e_{\alpha} = 0
	\end{equation}
	
	As with \cref{eq:constMultpTemp} we can deduce that $\dfrac{\sigma_{a \alpha}'}{\sigma_{a \beta}'} = d \in \R$. Then the above equation implies that
	
	\begin{equation}
		e_{\beta}+ \sigma_{a \beta} = \frac{e_{\alpha} + \sigma_{a \alpha}}{d}
	\end{equation}
	
	Let $\sigma_{a}^{I} = e_{\alpha} + \sigma_{a \alpha}$, then after absorbing $d$ into the $\Phi$ functions and relabelling, we can assume $\sigma_{a \alpha} = \sigma_{a \beta} = \sigma_{a}^{I}$ and $a_{\alpha a} = a_{\beta b} = 0$. Since the coordinates in $I$ are connected, we can assume that for any $\alpha, \beta \in I$ with $\alpha \neq \beta$ that $\sigma_{a \alpha} = \sigma_{a \beta} = \sigma_{a}^{I}$ and $a_{\alpha a} = a_{\beta b} = 0$ and then Equations \eqref{eq:remainBtemp} and \eqref{eq:remainCtemp} are both satisfied. Then for $\alpha \in I$ and $a \in M$, we have proven the following:
	
	\begin{align}
		H_{\alpha}^{2} = \Phi_{\alpha} \sigma_{a}^{I} \prod\limits_{\substack{\beta \in I \\ \beta \neq \alpha}} (\sigma_{\alpha \beta} + \sigma_{\beta \alpha})\quad (m = 1)
	\end{align}
	
	Also note that Equations~\eqref{eqn:remain} are satisfied whenever $i \in M$ and $j,k \in M^{c}$. Thus we can conclude that Equations~\eqref{eqn:remain} are satisfied whenever at least one of $i$, $j$ or $k$ is in $M$.  Furthermore one can easily check that Equations~\eqref{eqn:remain} are satisfied whenever $i,j,k$ are not all in the same partition.
\end{proof}

When the coordinates are disconnected, i.e. $\sim$ gives multiple partitions, one can easily show that the metric is a product metric.

\begin{proposition} \label{prop:prodMetrics}
	If the coordinates are disconnected, then the metric is a product metric which is given by \cref{eq:prodMetric}. Furthermore Equations~\eqref{eqn:remain} are satisfied whenever i, j or k aren't in the same connected component.
\end{proposition}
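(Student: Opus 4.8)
The plan is to exploit the defining dichotomy of the equivalence classes of $\sim$: whenever two indices lie in different connected components, the corresponding $\sigma$'s have vanishing derivatives, so they are constants that decouple the metric blockwise. Write the connected components as $I_{1},\dotsc,I_{p}$ with $p \geq 2$, the latter because the coordinates are assumed disconnected. The first step is to fix $\alpha$ in a component $I$ and $j$ in a different component. Since $\alpha \not\sim j$, in particular the first two alternatives of \cref{defn:connected} fail, giving $\sigma_{\alpha j}' = \sigma_{j\alpha}' = 0$. As $\sigma_{\alpha j}$ and $\sigma_{j\alpha}$ are functions of the single variables $x_{\alpha}$ and $x_{j}$ respectively, both are constants, so $\sigma_{\alpha j} + \sigma_{j\alpha}$ is a constant, which moreover is non-zero. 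Recalling the general form \eqref{HForm}, this constant factor of $H_{\alpha}^{2}$ can be absorbed into $\Phi_{\alpha}$, and independently the matching factor of $H_{j}^{2}$ into $\Phi_{j}$.

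After performing this absorption for every cross-component pair, the surviving factors of $H_{\alpha}^{2} = \Phi_{\alpha}(x_{\alpha})\prod_{\beta \in I,\, \beta \neq \alpha}(\sigma_{\alpha\beta} + \sigma_{\beta\alpha})$ all depend solely on the coordinates $x^{I} = \{x_{\gamma} : \gamma \in I\}$. Hence each $H_{\alpha}^{2}$ with $\alpha \in I$ is a function of $x^{I}$ alone, and collecting terms by component yields $g = \sum_{I} g^{I}$ with $g^{I} = \sum_{\alpha \in I} e_{\alpha} H_{\alpha}^{2}\,\d x_{\alpha}^{2}$ a block depending only on $x^{I}$. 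This is exactly the product form \eqref{eq:prodMetric} together with \eqref{eq:SCKTmetricStructB}.

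For the second assertion I would verify \eqref{eqn:remain} term by term. The key observation is that each of the nine terms occurring across \eqref{eqn:remainA}, \eqref{eqn:remainB} and \eqref{eqn:remainC} is a product of two derivatives $\sigma_{\cdot\cdot}'$ indexed by two \emph{distinct} unordered pairs drawn from $\{i,j\},\{j,k\},\{i,k\}$, multiplied by one of the sum factors $\sigma_{\cdot\cdot}+\sigma_{\cdot\cdot}$. If $i,j,k$ do not all lie in a single component, then at most one of these three pairs is intra-component, so in every term at least one of the two derivative factors is a cross-component derivative, which vanishes by the first step. Consequently every term vanishes and all three equations hold identically; when the three indices lie in three distinct components there is no intra-component pair at all and the conclusion is immediate.

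The computations are routine, and I do not expect a genuine obstacle. The only point requiring care is bookkeeping: one must check that the constants stripped from $H_{\alpha}^{2}$ and from $H_{j}^{2}$ are absorbed consistently and that, after absorption, the $I$-block retains no residual dependence on coordinates outside $I$. The term-by-term vanishing used in the second part is the cleanest device for avoiding an explicit case analysis over the ways $i$, $j$ and $k$ can be distributed among the components.
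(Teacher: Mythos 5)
Your proposal is correct and follows essentially the same route as the paper: cross-component derivatives $\sigma_{ij}'=\sigma_{ji}'=0$ force the factors $\sigma_{ij}+\sigma_{ji}$ to be nonzero constants absorbable into the $\Phi$'s, yielding the block form \eqref{eq:prodMetric}. Your term-by-term verification of \eqref{eqn:remain} (each term contains a derivative indexed by a cross-component pair, since transitivity of $\sim$ ensures at most one of the three pairs is intra-component) is precisely the check the paper leaves to the reader.
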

\begin{proof}
	Suppose $i \in I$ and $j \in J$ where $I$ and $J$ are a disconnected set of coordinates. Then it follows by definition that $\sigma_{ij'} = \sigma_{ji}' = 0$, thus we can absorb the factors $\sigma_{ij} + \sigma_{ji}$ into the $\Phi$ functions. Thus we have proven that for $i \in I$ the following holds:
	
	\begin{equation}
		H_{i}^{2} = \Phi_{i} \prod\limits_{\substack{j \in I \\ j \neq i}} (\sigma_{ij} + \sigma_{ji})
	\end{equation}
	
	Thus the metric has the form given by \cref{eq:prodMetric}. One can easily check that Equations~\eqref{eqn:remain} are satisfied whenever $i$, $j$ or $k$ aren't in the same connected component.
\end{proof}

\begin{proposition} \label{prop:recurse}
	If $g$ is a reducible orthogonal separable metric with diagonal curvature given by \cref{eq:prodMetric,eq:warpedProdMetric,eq:iregMetric} and $|I| > 1$, the metric $g^{I}$ pulls back to a orthogonal separable metric with diagonal curvature on the submanifold with metric proportional to $g^{I}$.
\end{proposition}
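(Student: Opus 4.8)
The plan is to realize $g^{I}$, up to a constant factor, as the metric that $g$ induces on a coordinate leaf, and then transfer orthogonal separability and the diagonal curvature condition from $g$ down to this induced metric by restricting the Levi-Civita equations \cref{IntConds} and Eisenhart's curvature formula \cref{Rjiik} to indices lying entirely in $I$. First I would identify the submanifold. In each of the three reducible cases the block $g^{I}$ is coupled to the rest of the metric through a single warping factor — $\prod_{a\le m}(e_{I}-\sigma_{a})$ for the warped product \cref{eq:warpedProdMetric}, $\sigma_{1}^{I}$ for the irregular metric \cref{eq:iregMetric}, and the trivial factor for the product metric \cref{eq:prodMetric} — and in every case this factor is a function of the base coordinates ($x_{a}$, $a\in M$, respectively $x_{1}$) only, hence of coordinates outside $I$. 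Fixing those base coordinates to constants cuts out a submanifold whose coordinates are exactly $(x^{I})$ and on which the factor takes a constant nonzero value $\lambda_{I}$; the induced metric is therefore $\lambda_{I}\,g^{I}$, proportional to $g^{I}$ as claimed. Because both orthogonal separability (characterized by \cref{IntConds}, whose terms involve only derivatives of $\log H_{\alpha}^{2}$) and the condition $R_{jiik}=0$ are unchanged when the metric is multiplied by a nonzero constant, it suffices to establish these two properties for $g^{I}$ itself.

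Next I would verify orthogonal separability of $g^{I}$. Writing $H_{\alpha}^{2}=\lambda_{I}(x^{\mathrm{base}})\,H_{\alpha,I}^{2}(x^{I})$ for $\alpha\in I$, where $H_{\alpha,I}^{2}$ denotes the corresponding diagonal component of $g^{I}$ and depends only on $(x^{I})$, the key observation is that for indices $j,k\in I$ the operator $\partial_{x_{j}}$ annihilates $\log\lambda_{I}$, so that $\partial_{x_{j}}\log H_{\alpha}^{2}=\partial_{x_{j}}\log H_{\alpha,I}^{2}$. Consequently the Levi-Civita equations \cref{IntCondsA} and \cref{IntCondsB} for $g$, evaluated on indices $i,j,k$ all belonging to $I$, reduce term by term to the Levi-Civita equations for $g^{I}$ in the coordinates $(x^{I})$. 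Since $g$ is separable these equations hold, so $g^{I}$ satisfies its own Levi-Civita equations and is therefore orthogonally separable.

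Finally I would transfer the diagonal curvature condition through \cref{Rjiik}. For distinct $i,j,k\in I$, the same annihilation $\partial_{x_{j}}\log H_{i}^{2}=\partial_{x_{j}}\log H_{i,I}^{2}$ shows that the bracketed expression in \cref{Rjiik} computed for $g$ coincides with the one computed for $g^{I}$, whence $R_{jiik}[g]=\lambda_{I}\,R_{jiik}[g^{I}]$ with $\lambda_{I}\neq 0$. As $g$ has diagonal curvature, $R_{jiik}[g]=0$, forcing $R_{jiik}[g^{I}]=0$ for all distinct $i,j,k\in I$; since $|I|>1$ this is precisely the diagonal curvature condition for $g^{I}$. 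The step deserving the most care — and the only real obstacle — is this reduction itself: one must confirm in all three structural cases that the factor coupling $g^{I}$ to the ambient metric depends only on coordinates outside $I$, so that $I$-derivatives kill it and both the restricted Levi-Civita equations and the restricted curvature formula collapse cleanly onto those of $g^{I}$. Everything else is bookkeeping together with the elementary invariance of both conditions under constant rescaling.
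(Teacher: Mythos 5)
Your proof is correct and follows essentially the same route as the paper's: both arguments rest on the observation that the factor coupling $g^{I}$ to the rest of the metric depends only on coordinates outside $I$, so that the Levi-Civita equations and Eisenhart's curvature formula \cref{Rjiik} restrict cleanly to the coordinate leaf, where the induced metric is a constant multiple of $g^{I}$. The only point worth making explicit is that your separability step invokes the sufficiency direction of Levi-Civita's theorem (the paper instead cites the known fact that a separable web restricts to a separable web on the integral manifolds of its foliations), but this is a standard equivalence and not a gap.
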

\begin{proof}
	We know that an orthogonally separable web restricted to one of the integral manifolds of its $n$ foliations is still separable \cite{Benenti1993}, and $g^{I}$ still has the form given by \cref{HForm} thus its Riemann curvature tensor will still satisfy $R_{ijik} = 0$ for $j \neq k$ on the integral manifolds.
\end{proof}

We can see how the classification works. If $n = 2$ then we've noted in the previous section that the general solution is given by \cref{eq:SCKTmetricStruct}. So suppose $n > 2$ and the general orthogonal separable metrics with diagonal curvature are known on manifolds with dimension $k \leq n-1$. If the coordinates are disconnected then \cref{prop:prodMetrics} shows us that the metric must have the form given by \cref{eq:prodMetric} and the only equations that haven't been solved are Equations~\eqref{eqn:remain} when $i,j,k$ are inside a connected component. If the coordinates are connected then \cref{prop:ConCoords} in conjunction with \cref{prop:OneConCoord} tells us that the metric must have the form given by \cref{eq:SCKTmetricStruct} or \cref{eq:iregMetric}. Furthermore in this case the only equations that haven't been solved are Equations~\eqref{eqn:remain} when $i,j,k \in I$ where $I \subseteq M^{c}$ is an equivalence class as given in \cref{prop:OneConCoord}. Now fix some $|I| > 1$, then by \cref{prop:recurse}, $g^{I}$ pulls back to an orthogonal separable metric with diagonal curvature on the submanifold with coordinates $(x_{I})$. Thus inductively we know the general form of $g^{I}$ since the dimension of the submanifold is at most $n-1$. In particular if $|I| > 2$ then the components of $g^{I}$ satisfy Equations~\eqref{eqn:remain}. Thus the solution $g$ will satisfy Equations~\eqref{eqn:remain} for all $i,j,k$ distinct and so $g$ satisfies all relevant equations. Thus we have found all the orthogonal separable metrics with diagonal curvature.

\section{Spaces of Constant Curvature} \label{sec:oSepSCC}

In this section our main goal is to show that the metric given \cref{eq:iregMetric} can be ruled out in spaces of constant curvature. In other words, all metrics in spaces of constant curvature are given by \cref{eq:SCKTmetricStruct} and thus are invariantly characterized by concircular tensors.

First we need the following definition. A \emph{twisted product} is a product manifold $M = \prod_{i=0}^{k} M_{i}$ of pseudo-Riemannian manifolds $(M_{i}, g_{i})$ where $\dim M_{i} > 0$ for $i > 0$ equipped with the metric $g = \sum_{i=0}^{k} \rho_{i}^{2} \pi_{i}^{*} g_{i}$ where $\rho_{i} : M \rightarrow \R^{+}$ are functions and $\pi_{i} : M \rightarrow M_{i}$ are the canonical projection maps \cite{Meumertzheim1999}. A \emph{warped product} is a twisted product with $\rho_{0} \equiv 1$ and $\rho_{i} : M_{0} \rightarrow \R^{+}$ for each $i > 0$, and is denoted $M = M_0 \times_{\rho_1} M_1 \times \cdots \times_{\rho_k} M_k$.

Let $M = M_0 \times_{\rho_1} M_1 \times \cdots \times_{\rho_k} M_k$ be a warped product as above. We denote the bilinear product induced by $g$ by $\bp{\cdot,\cdot}$ and $\norm{X}^{2} = \bp{X,X} $ for $X \in T_{p}M$. If $f \in \F(M)$, we denote the Hessian of f \cite[P.~86]{barrett1983semi}, by $S^{f}_{ij} = \nabla_{i}\nabla_{j} f$. If $X,Y \in T_{p}M$ span a non-degenerate 2-plane then the sectional curvature of the 2-plane, $K(X,Y)$, is given in terms of the Riemann curvature tensor $R$ as \cite[lemma~3.39]{barrett1983semi}:

\begin{align}
	K(X,Y) & = \frac{\scalprod{R(X,Y) Y}{X}}{\norm{X \wedge Y}^{2}}, & \norm{X \wedge Y}^{2} & = \bp{X,X}\bp{Y,Y} - \bp{X,Y}^{2}
\end{align}

Denote by $E_i$ the distribution associated with canonical foliation induced by the factor $M_i$. Let $X,Y \in \Gamma(E_0)$, $V \in \Gamma(E_i)$ and $U \in \Gamma(E_k)$ for $i, k > 0$. Let $H_i = - \nabla \log \rho_i$, which is the mean curvature normal of $E_i$ (see \cite{Meumertzheim1999,Rajaratnam2014a}). Then by eq.~(6) in \cite{Meumertzheim1999}, we have the following:

\begin{align}
	K_{XY} & = K_{XY}^{M_0} \label{eq:curv1} \\
	K_{XV} & = - \frac{S^{\rho_{i}}(X,X)}{\rho_{i}\norm{X}^{2}} \label{eq:curv2} \\
	K_{UV} & = - \bp{H_i,H_k} \quad (i \neq k) \label{eq:curv7} \\
	K_{UV} & = \frac{K_{UV}^{M_i} - \norm{\nabla \rho_{i}}^{2}}{\rho^{2}_{i}} \quad (i = k) \label{eq:curv9}
\end{align}

Now suppose that $M$ has constant curvature $\kappa$. After applying the polarization identity to \cref{eq:curv2}, we get

\begin{equation} \label{eq:consCurv2}
	S^{\rho_{i}}(X,Y) + \kappa \rho_{i} \bp{X,Y} = 0
\end{equation}

By \cref{eq:curv9}, we have:

\begin{equation}
	K_{VU}^{M_i} = \kappa \rho^{2}_{i} + \norm{\nabla \rho_{i}}^{2}
\end{equation}

Now,

\begin{align}
	\nabla_{X} K_{VU}^{M_{i}} & = 2 \kappa \rho_{i} \nabla_{X} \rho_{i} + 2 \bp{\nabla_{X} \nabla \rho_{i}, \nabla \rho_{i}} \\
	& = 2(\kappa \rho_{i} \nabla_{X} \rho_{i} + S^{\rho_{i}}(X,\nabla \rho_{i})) \\
	& = 2(\kappa \rho_{i} \nabla_{X} \rho_{i} -\kappa \rho_{i} \bp{X,\nabla \rho_{i}}) \\
	& = 0
\end{align}

Hence for each $i > 0$, $(M_{i}, g_{i})$ necessarily has constant curvature, say $\kappa_{i}$. Finally \cref{eq:curv7} gives us the following:

\begin{equation} \label{eq:consCurv7}
	\bp{\nabla \log \rho_{i}, \nabla \log \rho_{k}} = -\kappa \quad (i \neq k)
\end{equation}

Now suppose $\dim M_0 = 1$ and suppose coordinates on $M_0$ are chosen such that $\tilde{g} = \varepsilon \d x_{1}^{2}$ where $\varepsilon = \pm 1$ as the case may be. Then \cref{eq:consCurv2,eq:consCurv7} imply the following:

\begin{align}
	\partial_{1}^{2} \rho_{i} & = - \kappa \varepsilon \rho_{i}  \\
	(\partial_{1} \log \rho_{i})(\partial_{1} \log \rho_{k}) & = - \varepsilon \kappa \quad (i \neq k)
\end{align}

Hereafter we denote $\omega = \kappa \varepsilon$ and let $\sigma_{i} = \rho_{i}^{2}$. We make exclusive use of the above two equations in the following calculations, suppose $i \neq k$, then:

\begin{align}
	\frac{\sigma_{i}'}{\sigma_{k}'} & = \frac{\rho_{i} \rho_{i}'}{\rho_{k} \rho_{k}'} \\
	& = \frac{\rho_{i} \rho_{i}'}{\rho_{k}^{2}} (-\omega \frac{\rho_{i}'}{\rho_{i}}) \\
	& = - \omega (\frac{\rho_{i}'}{\rho_{k}})^{2}
\end{align}

Thus

\begin{align}
	(\frac{\sigma_{i}'}{\sigma_{k}'})' & = - 2\omega (\frac{\rho_{i}'}{\rho_{k}}) (\frac{\rho_{i}'}{\rho_{k}})' \\
	& = - 2\omega (\frac{\rho_{i}'}{\rho_{k}}) (\frac{\rho_{i}''\rho_{k}-\rho_{i}'\rho_{k}'}{\rho_{k}^{2}}) \\
	& = - 2\omega (\frac{\rho_{i}'}{\rho_{k}^{2}}) (\rho_{i}''+ \omega \rho_{i}) \\
	& = 0
\end{align}

Hence any warped product decomposition of a space of constant curvature with $\dim M_0 = 1$ has a metric given by \cref{eq:warpedProdMetric}. Thus we have proven \cref{thm:KEMsep}.

\section{Conclusion}

In this article we have obtained all orthogonal separable metrics with diagonal curvature. As a consequence, we have generalized some results in \cite{Kalnins1986} to arbitrary spaces of constant curvature and have independently verified  those results for the case of Riemannian spaces of constant curvature.

The results of this article raise the following question: Are there any other spaces of interest in which one can prove that orthogonally separable coordinates have diagonal curvature, or at least admit such coordinates? This is a very nice property to have since if we can rule out the metric \cref{eq:iregMetric}, then the separable coordinates are invariantly characterized by concircular tensors and hence highly amenable to analysis \cite{Crampin2003,Benenti2005a,Rajaratnam2014a}. It may also be of interest to find another condition in addition to the diagonal curvature condition in order to characterize KEM coordinates.

%TODO: complete this. conformally flat spaces should have diagonal curvature
A related question: Given a pseudo-Riemannian manifold, what is a necessary and sufficient intrinsic condition that guarantees the existence of an orthogonal coordinate system having diagonal curvature? One can show that a necessary condition is the existence of an orthogonal coordinate system in which the Ricci tensor is diagonal.
%TODO: could a solution be related to the Haantjes tensor of the curvature operator vanishing? Only problem here is that the curvature operator doesn't act on vectors, but skew bivectors...

In a coming article, we will classify the point-wise diagonalizable concircular tensors in spaces of constant curvature with Euclidean signature or Lorentzian signature with positive or zero curvature. This will enable one to obtain all orthogonal separable coordinates and apply the BEKM separation algorithm in these spaces \cite{Rajaratnam2014a}. By doing this we will derive necessary and sufficient conditions to construct certain KEM coordinates over an arbitrary warped product.

\section*{Acknowledgments}

We would like to express our appreciation to Dong Eui Chang for his continued interest in this work.  The research was supported in part by National Science and Engineering Research Council of Canada Discovery Grants (D.E.C. and R.G.M.). The first author would like to thank Spiro Karigiannis for reading his thesis \cite{Rajaratnam2014}, which contains the contents of this article.
%-----------------------------------------------------------
%\appendix
%\include{app1}
%-----------------------------------------------------------
%Code to add bibliography to the table of contents.
\phantomsection
\addcontentsline{toc}{section}{References}
\printbibliography
%-----------------------------------------------------------
\end{document}